\renewcommand\footnotetextcopyrightpermission[1]{} % removes footnote with conference information in first column
\newcommand{\eat}[1]{}
\DeclareFontFamily{U}{mathx}{\hyphenchar\font45}
\DeclareFontShape{U}{mathx}{m}{n}{
      <5> <6> <7> <8> <9> <10>
      <10.95> <12> <14.4> <17.28> <20.74> <24.88>
      mathx10
      }{}
\def\E{{\ensuremath{\mathbb E}}}
\long\def\comment#1{}
\newcommand{\ical}{\ensuremath{\mathcal I}}
\begin{document}

	\graphicspath{ {./figs/} }
	\title{Count-Min: Optimal Estimation and Tight Error Bounds using Empirical Error Distributions}
	
	\author{Daniel Ting}
      \affiliation{
Tableau Software \\
1621 N 34th Street \\
Seattle, WA
	}
	\email{dting@tableau.com}

	\begin{abstract}
	The Count-Min sketch is an important and well-studied data summarization method. It allows one to estimate the count of any item in a stream using a small, fixed size data sketch. However, the accuracy of the sketch depends on characteristics of the underlying data. This has led to a number of count estimation procedures which work well in one scenario but perform poorly in others. A practitioner is faced with two basic, unanswered questions. Which variant should be chosen when the data is unknown? Given an estimate, is its error sufficiently small to be trustworthy?
	
	We provide answers to these questions. We derive new count estimators, including a provably optimal estimator, which best or match previous estimators in all scenarios. We also provide practical, tight error bounds at query time for both new and existing estimators. These error estimates also yield procedures to choose the sketch tuning parameters optimally, as they can extrapolate the error to different choices of sketch width and depth.
	
	The key observation is that the distribution of errors in each counter can be empirically estimated from the sketch itself. By first estimating this distribution, count estimation becomes a statistical estimation and inference problem with a {\em known} error distribution. This provides both a principled way to derive new and optimal estimators as well as a way to study the error and properties of existing estimators. 
	\end{abstract}
	
	\maketitle
	
	\section{Introduction}
	The Count-Min sketch has proven to be one of the most effective sketches for obtaining approximate counts for pointwise queries and for computing approximate inner products. It is especially effective in the common data scenario where the count distribution is highly skewed. 
	
		However, there are notable cases where the sketch performs sub-optimally or poorly. For example, when there are few heavy hitters and a large number of items relative to the size of the sketch, the Count-Min sketch is highly biased and performs poorly compared to the Count sketch \cite{charikar2002countsketch}. This has led to a number of attempts
		\cite{jin2003dynamically}, \cite{lee2005improving}, \cite{deng2007new}, \cite{lu2008counter}, \cite{chen2017bias} to improve estimation from the Count-Min sketch in these regimes.
		In all cases but one \cite{lu2008counter}, these methods can be shown to perform worse than the basic Count-Min estimator in some regimes or for some sketch parameter settings. The one case with guaranteed better accuracy, however, can only be applied in the highly restrictive and computationally expensive setting where all possible items are known and their counts jointly estimated. As a result, it is unclear to a practitioner which method to choose. Although several empirical studies \cite{rusu2007statistical}, \cite{cormode2008finding} have attempted to address this issue, choosing the best method has required a priori knowledge of the properties of the unseen data.
		
		A second issue with the Count-Min sketch is that although it has a probabilistic error guarantee, this guarantee is extremely loose and of no practical use when reporting the error of any query.  Again, the only proposed method for obtaining  errors with practical magnitudes is given by \cite{lu2008counter} where all counts must be decoded. 		

	This paper introduces methods that provides better accuracy under all regimes and takes the guesswork out of count estimation. The resulting estimator also has a tight, practical error bound. Furthermore, it can utilize joint estimation of multiple counts to yield more accurate results without needing to know the entire universe of items. 

	Our approach treats count estimation from the Count-Min sketch as a statistical estimation problem where the irrelevant counts are modeled as error terms. 
	The key idea is that the distribution of these error terms can be estimated from the sketch itself. Equipped with an error distribution, we consider two classes of estimators: ones which use the full likelihood information and  ad-hoc estimators with some good properties. All existing estimators are shown to be from the latter class. 
	For these estimators, we show that bootstrap methods can be used to debias a wide class of estimators and obtain tight confidence intervals that bound the error. 

	We propose two likelihood based estimators: the standard maximum likelihood estimator and a Bayesian estimator. The Bayesian estimator, while more computationally expensive, is proved to be optimal even when the sketch is of fixed depth. The more practical maximum likelihood estimator is empirically shown to outperform all other methods in all scenarios.  
	
	Key to the likelihood based methods is a non-parametric estimate of the error distribution. We show this can be accomplished with log-concave density estimation. This estimator has attractive properties as it requires no tuning parameters and yields a concave log-likelihood function that ensures maximum likelihood estimation is fast and easy. We further show that it generates robust count estimators even when the assumption of log-concavity is false.
	
	In addition to the practical improvements motivated by theory, our work also advances our understanding of the Count-Min and related sketches. We serve as a brief survey of existing estimation algorithms and summarize the techniques used. We show that unlike existing methods which exploit only one or two techniques, our method is able to exploit all of them to obtain better results. Furthermore, we use asymptotic theory to explain under which regimes  different count estimators and sketches perform well. 

	This understanding also has practical consequences in sketch construction. In particular, we find given a fixed space constraint, it is generally preferable to reduce the number of hash functions and increase the width of the sketch, as it increases the likelihood of belonging in the "super-efficient" regime where 
	 the Min estimator achieves the optimal rate. When additional information about the error distribution is known a priori, we show how to optimally select these sketch parameters.

	Our methods may also be applied to other sketches such as the Count sketch \cite{charikar2002countsketch}, also known as the Fast-AMS sketch when applied to inner-product estimation \cite{alon1999space}.

	The paper is structured as follows. First, we review the Count-Min sketch and define the empirical error distribution relative to a pointwise query. Next, we give a brief survey of existing work on improving estimation for the sketch, provide insights into how they work, and show they can be generalized in natural ways. Section \ref{sec:bootstrap} then introduces the bootstrap and shows how simple statistics can be converted into unbiased estimators for the count and gives procedures to construct tight error bounds. As simple statistics may not make full use of the information in the data, section \ref{sec:likelihood} shows that the true likelihood can estimated from the data and proposes estimators based on it.
	We also show that the resulting estimators have robust estimation properties and that they can be used to estimate multiple counts jointly through regression. Section \ref{sec:empirical} provides empirical results on real and synthetic data to show that our estimators are indeed the most accurate in a variety of settings and that the error bounds are tight
	We then discuss asymptotics that aid our understanding of the sketch, applications to parameter tuning, and the use of our techniques to other sketches, in streaming settings, and for inner product estimation. 
	
	Throughout the paper we rely heavily on statistical estimation theory and concepts that we unfortunately do not have sufficient space to cover in detail. These concepts are the full distribution based counterparts to the tail probability and concentration inequality driven theory common in the sketching literature. 
	\vspace{-0.1cm}
	\section{Count-Min}
	The Count-Min sketch compresses and aggregates a large and possibly unknown number of $(item, count)$ tuples into a finite sketch of $r \times k$ numeric counters. 
	It allows for two basic types of queries: 1) pointwise  queries
	which provide an estimate of the aggregated count for any item or set of items,
	and 2) inner product queries which provide for an estimate for
	$u^T v$ for count vectors $u$ and $v$ indexed by distinct items. 
	We write the vector of counts indexed by item by $\mathbf{n}$.
	These two basic queries can be used to formulate more complex queries. For example, aggregated counts
	for range queries can be constructed out of pointwise queries that expand numeric valued items into membership in a set of dyadic ranges 
	\cite{cormode2005countmin}.
	We focus on pointwise queries in this paper and briefly discuss the application of our techniques to the inner product case.
	
	The Count-Min summarization technique can be decomposed into two parts: the construction of the sketch and the estimation procedure for  count queries. In this paper, we focus on improvements to estimation and not on sketch construction. For clarity, we will refer to the construction as the Count+ summarization and the estimator as the Min estimator.  Here, the plus sign represents the one-sided errors for the sketch. 
	
	An $r \times k$ Count+ summarization consists of two parts: a hash based projection and replication. The first hashes each item to one of $k$ counters. The $k$ vector of observed counters is obtained by summing the counts in each bin. The second part simply replicates this process $r$ times with independent hashes.
	$r$ and $k$ are often referred to as the depth and width of the sketch.

	More precisely, given a hash function $h$, the item, count pair $(x_i, c_i)$ updates the counter vector $\mathbf{V}$ by the update rule
	\begin{align}
	\label{eqn: CM update}
	V_{h(x_i)}^{(new)} &= V_{h(x_i)}^{(old)} + c_i.
	\end{align}
	This process is repeated $r$ times to obtain independent identically distributed (i.i.d.) vectors $\mathbf{V^{(a)}}$ using independent hashes $h^{(a)}$ for $a = 1, \ldots, r$.

	Estimation from this sketch is simple and relies on the fact that counts are non-negative. For any of the $k$-vectors $V^{(a)}$,
	the counter $V^{(a)}_{h^{(a)}(x)}$ is an upper bound on the total count $n_x$ for item $x$. The original Min-estimator for the Count-Min sketch takes the minimum over the $r$ replicates 
	\begin{align}
	\hat{N}_x &= \min_a V^{(a)}_{h^{(a)}(x)} \geq n_x.
	\end{align}

	Several simple observations can be made from this construction and estimator. 
	Only the counters that an item is hashed to contain any information about its count. Removing an item and its count from the Count+ summarization yields vectors of exchangeable error terms where the error terms are all non-negative.
	The Min estimator is biased as it cannot underestimate the count.
	More formally, for any replicate $V$,
	\begin{align}
	\label{eqn:basic error model}
	V_i^{(a)} &= n_x 1(h^{(a)}(x) = i) + \epsilon^{(a)}_i 	
	\end{align}
	where the $\epsilon^{(a)}_i \geq 0$ are identically distributed and exchangeable.

	These observations motivate our basic strategy. Take counters which only contain error terms. Use them to empirically estimate a  non-centered, non-negative error distribution. An item's counters plus the error distribution for those counters provides all the available information to estimate the item's count.
	Apply statistical estimation techniques to estimate the count and obtain an error estimate. When the error distribution is correct, the resulting estimator is optimal.

	\begin{table}
		\begin{tabular}{c|l} 
			Symbol & Definition \\ \hline
			$\mathbf{n}$ & Vector of all counts indexed by item\\
			$\hat{n}_x$ & Estimated count for item $x$ \\
			$d$ & Number of distinct items\\
			$\ical$ & Set of indices that $x$ or $S$ are hashed to \\
			$r$ & Number of replicates in Count-Min sketch\\
			$k$ & Number of counters in one replicate \\
			$h^{(a)}$ & Hash function for replicate $a$ \\
			$\mathbf{V}$ & Count-Min counters \\
			$V^{(a)}_i$, $V_{(a,i)}$ & $i^{th}$ counter in replicate $a$ \\
			$\epsilon$ & Vector of errors (relative to some item $x$) \\
			$F$, $\mathbb{F}$ & True and empirical distribution of errors  \\
			$\mathbf{M}, M^{(a)}$ & Projection matrix for the sketch and for  replicate $a$ \\
			$\lambda$ & Expected number of items per counter $\lambda = d/k$.
		\end{tabular}
	\caption{Table of symbols}
	\end{table}

	\subsection{Linear algebra of the Count-Min sketch}
	\label{sec:linear alg}
The Count+ summarization is an example of a linear sketch. In other words, each replicate is a random projection $M^{(a)}$ of the counts $\mathbf{n}$ where the construction of $M^{(a)}$ does not depend on $\mathbf{n}$. This may be expressed as
\vspace{-0.1cm}
\begin{align}
\label{eqn:random projection}
V^{(a)} &= M^{(a)}\mathbf{n}
\end{align}
where $M^{(a)}$ is a $k \times d$ random binary matrix with precisely 1 non-zero value per column.
More explicitly, $M^{(a)}_{ix} = 1$ if $h^{(a)}(x) = i$ and $0$ otherwise.
For succinctness in notation we denote the concatenation of the $V^{(a)}$ as simply $\mathbf{V}$ and likewise for $\mathbf{M}$.
We also write $V^{(a)}_i$  by $\mathbf{V}_{(a, i)}$ and similarly for $\mathbf{M}$.

Whenever only a subset $S$ of items are of interest, 
the sketch has the form,
\vspace{-0.2cm}
\begin{align}
\label{eqn:CountMin regression}
\mathbf{V} &= \mathbf{M}_{\cdot, S}  \mathbf{n}_S + \epsilon(S) \\
\epsilon(S) &= \mathbf{M}_{\cdot, S^c} \mathbf{n}_{S^c}.
\end{align}
The equation representing the counters $\mathbf{V}$ has the same form as a linear regression problem where $\mathbf{M}_{\cdot, S}$ are the known covariates and $\mathbf{n}_S$ are the unknown regression coefficients.
The error terms $\epsilon(S)$ are defined relative to the queried items $S$.
It differs slightly from typical linear regression problems in that the errors are not centered to have mean zero, and the distribution of the errors is not known or assumed.
For notational convenience, we will simply write $\epsilon$ for the error term as $S$ is always clear from the context. 

\subsection{Empirical distributions}
Given an item $x$ and Count+ summary, only the $r$ counters that $x$ hashes to provide information about the count $n_x$. The remaining $r(k-1) \gg r$ counters are draws from an error distribution. This large sample allows the error distribution to be accurately estimated and reduces the count estimation problem to a familiar problem of parameter estimation with a known error distribution. 

Denote the unknown true error distribution's cumulative distribution function (c.d.f.) as $F$ and its density or mass function as $f$. When $Y$ is drawn from a distribution with c.d.f. $F$, we write $Y \sim F$.
In the case of a pointwise query for a single item, the distribution of a counter $V_{(a, h^{(a)}(x))} \sim F(\cdot - n_x)$. Estimating the count $n_x$  is a parametric estimation problem from the one-parameter location family $\{F(\cdot - \theta)\}_{\theta \geq 0}$ of distributions.

	\section{Existing work}
		Several existing improvements to the Min estimator have been proposed. 
	The estimation techniques for the Count+ summary can be categorized into four basic ideas:
	\begin{enumerate}
		\item Bias reduction
		\item Linear Regression
		\item Support constraints
		\item Robust objective choice
	\end{enumerate}
	Each existing estimator exploits only one or two of these ideas. For example, the Min estimator exploits only the non-negative support of the error distribution. The Median estimator exploits only a robust $L_1$ objective choice.
	
	\subsection{Debiasing}
	\label{sec:debiasing}
	Most prior work, \cite{deng2007new}, \cite{jin2003dynamically}, \cite{chen2017bias}, focuses on debiasing the estimator under different choices of objectives. 
	We describe this debiasing operation with a more general procedure and list the choices made by each procedure. This allows us to extend debiasing to a large class of base estimators, such as any quantile.
	
	Let $\ical$ be the set of (replicate number, index) $\in \{1,\ldots,r\} \times \{1,\ldots, k\}$ pairs that item $x$ is hashed to. Let $T$ be some  function on a set of $r$ counters so that
	\begin{align}
	\label{eqn:translation property}
	T(V_\ical) &= n_x + T(\epsilon_\ical).
	\end{align}
	We refer to this as the translation property in this paper.
	Obvious examples of $T$ include the mean, minimum, median, and any quantile. These are also all special cases of maximizers of the form $T(V_\ical) = \arg \max_{\theta} J(V_\ical - \theta)$. For the mean, $J(x) = \|x\|_2^2$, and for the median, $J(x) = \|x\|_1$ and is a robust loss function.
	
	For any $T$ satisfying this property, $T(V_\ical) - \mu$ is an unbiased estimate for $n_x$ when $\mu = \E T(\epsilon_\ical)$. 
	This yields a general method for constructing a debiased estimator. 1) Choose a function $T$ with the translation property, and 2) find an empirical estimate of the bias $\mu$.
	
	For the hCount* estimator \cite{jin2003dynamically}, $T$ remains the minimum. To estimate the bias, they explicitly query for a small set of items that are known to have count $0$ and take the average of the corresponding estimates. 

	For the CMM estimators \cite{deng2007new}, $T$ is taken to be the median. Rather than explicitly querying to find noise counters, they use counters that do not to contain the query key to estimate the bias. 
	 Since $\E T(\epsilon_\ical) \approx \E T(\epsilon_{\ical'})$ regardless of the sizes of $\ical$ and $\ical'$, the resulting estimate is nearly unbiased. 

	 Bias Aware estimation \cite{chen2017bias} proposes other debiased Median and Mean estimators for $T$. They differ from other debiasing methods since they use information not contained in the sketch itself.
	 Rather than directly applying the mean or median to the set $V_{\ical}$ of relevant counters, they compute "debiased counters" $\tilde{V}_i = V_i - \beta (W_i - 1)$ where $W_i$ is the number of items hashed to counter $V_i$ and $\beta$ is a per item bias estimate. The statistic $T(\tilde{V}_\ical)$ has the translation property and does not need further debiasing. However, computing this requires knowing and being able to iterate over the universe of distinct items.

	\subsection{Regression and Support Constraints}	
	When multiple items counts are estimated together, estimation can be improved.
	One item's estimate can reduce the error for another item when there is a hash collision. More formally, equation \ref{eqn:CountMin regression} shows that adding elements to the set $S$ of desired item counts reduces the number of items mapping to the error term. When the added items are heavy hitters, this can substantially reduce the magnitude of the error.
	The choice of regression model is thus dictated by what one knows about the universe of items and assumptions about the unknown error distribution
	
	Under the assumption that the error distribution is normal and only a subset $S$ of items are known, one recovers the linear least squares method of \cite{lee2005improving}. This is equivalent to the solution of the maximization problem
	\begin{align}
	\hat{n}_s = \arg \max_\theta \|V - \mathbf{M}_{\cdot, S} \theta\|_2^2.
	\end{align}
	In the case where all item counts are jointly estimated and the linear system $\mathbf{V} = \mathbf{M} \theta$ is overdetermined,
	the least-squares estimator finds the exact counts.

	If the entire universe of items is known, the Counter Braids estimation algorithm \cite{lu2008counter} is guaranteed to be no worse than the Min estimator and can often recover the exact counts. 
	The Counter Braids estimator does so via a message passing algorithm that provides deterministic upper and lower bounds on the estimated counts. We show in appendix \ref{sec:appendix counter braids}
	that this algorithm can be formulated in as a standard optimization problem. It is a cutting plane algorithm \cite{kelley1960cutting} for finding the feasible set for an optimization problem, and that the feasible set exploits the non-negative support of error distributions. 

	Exploiting ideas from both methods yields the general class of regression based procedures that solve the {\em constrained} optimization problem
\vspace{-0.2cm}
	\begin{align}
	\min_{\theta > 0\,s.t.\, \mathbf{M}_{\ical, \cdot} \theta \leq \mathbf{V}_\ical} J(\mathbf{V}_\ical - \mathbf{M}_{\ical, \cdot} \theta)
	\end{align}
	where $J$ is some loss function. 
	Section \ref{sec:likelihood} will show that an estimated log-likelihood function yields a good loss function. 

	\section{Our methods}
	When the problem is fulled modeled by a statistical model, the four techniques listed in the previous section can be simplified into two: linear regression and modeling the error distribution. The error distribution encodes the bias, support, and optimal objective function to use for count estimation. 
	In addition, knowledge of the error distribution yields the exact sampling distribution of an estimator and corresponding tight confidence intervals (CIs).
		
	We propose two methods based on non-parametric modeling of the error distribution.  First, we propose a class of bootstrap estimators. This class of estimators can be based off statistics that are fast and easy to compute and implement. It covers all existing debiased estimators and allows for the easy generation of others such as estimators based on other quantiles or trimmed means. Second, we propose full likelihood based estimators based on an empirical estimate of the error density or mass function. These methods can incorporate regression techniques to exploit information about the universe of items. 

	\section{Bootstrap Estimators}
	\label{sec:bootstrap}
	Debiasing and computing tight error bounds bounds requires knowing the distribution of the statistic $T(V_\ical)$. 
	The bootstrap \cite{efron1979bootstrap} estimates this distribution by resampling observations and examining the distribution of the results on the simulated samples. The n\"aive bootstrap will not work since there are only a small number $r$ of relevant counters to resample. However, when a statistic has the translation  property, one can instead sample from the $r (k-1)$ error counters.

	Theorem \ref{thm:bootstrap estimate} shows that when this is done, then any $T$ with the translation property can be turned into an unbiased estimator of an item's count. 
	Existing debiased estimators can be seen as instances of this bootstrapping procedure. 
	While our analysis suggests easier ways to compute the bias and yields new estimators, our primary contribution is applying the bootstrap to yield tight confidence intervals and in its application to new base statistics $T$. We also address computational issues that arise with the bootstrap and show  biases and confidence intervals for estimators based on the minimum value or any quantile can be recovered without resorting to an expensive Monte Carlo simulation.

	\begin{theorem}
		\label{thm:bootstrap estimate}
		Let $T$ be any function that satisfies the translation property. Consider an item $x$ and the collection of indices $\ical(x)$ that $x$ is hashed to. Consider the empirical distribution of the counters excluding those in $\ical(x)$, and denote expectation under this distribution by $\mathbb{E}_{\ical(x)^c}$. Let $Y_r$ be $r$ i.i.d. draws from this distribution. Then,
		\begin{align}
		\hat{n}_x &= T(V_{\ical(x)}) - \mathbb{E}_{\ical(x)^c} T(Y_r)
		\end{align}
		 is an unbiased estimator for the count $n_x$.
	\end{theorem}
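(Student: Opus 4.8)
The plan is to begin from the translation property and reduce the unbiasedness claim to an identity between two expectations of the same functional of $r$ error draws. Writing $\ical = \ical(x)$ for brevity and applying the translation property~\eqref{eqn:translation property}, we have $T(V_\ical) = n_x + T(\epsilon_\ical)$, so
\begin{align}
\hat{n}_x - n_x = T(\epsilon_\ical) - \mathbb{E}_{\ical^c} T(Y_r).
\end{align}
Taking expectations over the hash randomness of the sketch, unbiasedness is equivalent to $\E\, T(\epsilon_\ical) = \E\big[\mathbb{E}_{\ical^c} T(Y_r)\big]$. I would prove this by showing that both sides equal the \emph{true} bias $\mu := \E_{Z_1,\dots,Z_r \sim F}\, T(Z_1,\dots,Z_r)$, the expectation of $T$ applied to $r$ independent draws from the true error c.d.f.\ $F$.

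First I would dispatch the left-hand side. The entries of $\epsilon_\ical$ are the error contents of the $r$ bins that $x$ hashes to, one per replicate. Since the hashes $h^{(1)},\dots,h^{(r)}$ are independent, these $r$ error terms are mutually independent; and by the error model~\eqref{eqn:basic error model}, conditioning on where $x$ lands and using that the $k$ bins of a replicate are exchangeable once $x$'s own count is removed, each term is marginally distributed as $F$. Hence $\epsilon_\ical$ is exactly $r$ i.i.d.\ draws from $F$, giving $\E\,T(\epsilon_\ical) = \mu$ with no approximation.

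The substance of the argument, and what I expect to be the main obstacle, is the right-hand side. Each of the $r(k-1)$ counters indexed by $\ical^c$ is marginally $F$-distributed, so the empirical c.d.f.\ $\mathbb{F}$ they define is an unbiased estimate of $F$, and the target becomes $\E[\mathbb{E}_{\ical^c}T(Y_r)] = \mu$. The delicacy is that $Y_r$ is $r$ \emph{i.i.d.} draws from the finite pool $\ical^c$, so two draws may reuse the same counter, and the pooled counters are not globally independent (within one replicate the $k$ bin contents share a common total). I would expand $\mathbb{E}_{\ical^c}T(Y_r)$ as an average over all index tuples and classify them by their pattern of repeated indices: tuples that pick one counter from each of $r$ distinct replicates supply independent, $F$-marginal arguments and contribute exactly $\mu$, whereas tuples with a repeated replicate produce the ``diagonal'' correction terms. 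For additively separable $T$ such as the mean these corrections cancel identically and the theorem is exact; the real work is to show the remaining corrections contribute nothing in expectation for general $T$ such as the minimum or an arbitrary quantile, either by exploiting within-replicate exchangeability to pair diagonal against off-diagonal terms, or at worst by controlling them at the $O(1/(r(k-1)))$ scale governing the bootstrap plug-in. Establishing this cancellation is where I expect essentially all of the difficulty to reside.
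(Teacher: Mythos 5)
Your reduction via the translation property and your treatment of the left-hand side are sound: for fixed counts, the $r$ entries of $\epsilon_{\ical(x)}$ are independent across replicates and marginally $F$-distributed, so $\E\,T(\epsilon_{\ical(x)})=\mu$. But the proposal stalls exactly at the step you flagged as the main difficulty, and the route you sketch cannot be completed, because under your literal reading of $Y_r$ as $r$ i.i.d.\ draws from the \emph{pooled} empirical distribution of all $r(k-1)$ error counters, the diagonal corrections do \emph{not} vanish in expectation for nonlinear $T$. Concretely, take $r=2$, $k=2$, $T=\min$: the pool contains one counter per replicate, with errors $A$ and $B$ that are i.i.d., and the pooled bootstrap gives $\E\bigl[\mathbb{E}_{\ical(x)^c}T(Y_2)\bigr]=\tfrac14(\E A+\E B)+\tfrac12\,\E\min(A,B)=\tfrac12\E A+\tfrac12\E\min(A,B)$, which strictly exceeds $\E\min(A,B)=\E\,T(\epsilon_{\ical(x)})$ unless the error distribution is degenerate. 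So the pairing/cancellation you hoped to establish is false, and the theorem's \emph{exact} unbiasedness fails under that sampling scheme; your fallback of bounding the corrections at scale $O(1/(r(k-1)))$ would only yield approximate unbiasedness (and for general $T$ there are additional nonvanishing corrections from distinct same-replicate counters, which are negatively dependent rather than independent).

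The missing idea, which is essentially the paper's entire proof, is a symmetry device that makes the tuple expansion unnecessary: introduce a phantom item $y$ with $n_y=0$, so that $\ical(y)$ consists of exactly one uniformly chosen counter per replicate, independently across replicates. Since the within-replicate error vector is exchangeable and independent of where $x$ lands, $\E\,T(\epsilon_{\ical(x)})=\E\,T(\epsilon_{\ical(y)})$; and on the event $\ical(y)\cap\ical(x)=\emptyset$ one has $\epsilon_{\ical(y)}=V_{\ical(y)}$, so the bootstrap expectation equals $\E\,T(\epsilon_{\ical(y)})$ exactly. The point is that the resampling intended by the theorem is \emph{structured}, one counter per replicate rather than i.i.d.\ from the pooled collection -- this is what Algorithm \ref{alg:bootstrap} actually implements by applying $T$ columnwise to $Y_i=\{V_{(j,i)}:j=1,\ldots,r\}$ -- so repeated indices and same-replicate tuples never arise, no cancellation is needed, and unbiasedness holds exactly for every $T$ with the translation property, including the minimum and arbitrary quantiles. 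Your counterexample-prone reading is a defensible parse of the theorem statement, but to prove the result you must adopt the per-replicate sampling, at which point the exchangeability argument closes the proof in two lines.
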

	\begin{proof}
		Let $y$ be a randomly chosen item with count $n_y = 0$.
		Denote by $\epsilon = \mathbf{V} - \mathbf{M}_{\cdot, x} n_x$ the vector of error terms for item $x$. By symmetry, $\E T(\epsilon_{\ical(x)}) = \E T(\epsilon_{\ical(y)})$.
		Since $\epsilon_{\ical(y)} = V_{\ical(y)}$ whenever $\ical(y) \cap \ical(x) = \emptyset$, $\mathbb{E}_{\ical(x)^c} T(Y_r) = \E T(\epsilon_{\ical(y)})$. Hence, 
		$\E \hat{n}_x = n_x + \E T(\epsilon_{\ical(x)} - \E T(\epsilon_{\ical(y)}) = n_x$.  
	\end{proof}

	While this theorem constructs an unbiased estimator out of any base statistic $T$ that satisfies the translation property, it is possible for the resulting estimate to be negative. 
	When counts are always non-negative, it is sensible to truncate the estimate at 0 to ensure all estimates are non-negative as well. 
	This results in a slightly biased estimator. We apply this truncation to all estimators, and hence refer to them as debiased and not unbiased estimators. We also note that the base statistic $T$ cannot be a truncated statistic. Otherwise, it cannot have the translation property.
	
\subsection{Tight error estimation}
	 Theorem \ref{thm:CI flipped} shows the bootstrap can be used to construct confidence intervals that have the correct finite sample coverage in all situations. A trivial corollary shows the resulting confidence intervals are tight. A confidence interval $R(V)$ for $n_x$ at level $1-\alpha$ is a probabilistic error bound which guarantees that 
	 $P(n_x \in R(V)) \geq 1-\alpha$. If expressible as $R(V) = (\tilde{n}_x - \delta, \tilde{n}_x + \delta)$, the error guarantee is of the form $P(|n_x - \tilde{n}_x| > \delta) < \alpha$.

	\begin{theorem}
		\label{thm:CI flipped}
		Let $u_q$ be the $q$ quantile of the empirical distribution $\mathbb{G}$ of $T(Y_r)$. The interval $[T(V_{\ical(x)}) - u_b, T(V_{\ical(x)}) - u_{a}]$ is a $(b-a)$ confidence interval for the count $n_x$. The coverage of the interval is $\mathbb{G}(u_b) - \mathbb{\underline{G}}(u_a)$ where $\mathbb{\underline{G}}(y)$ denotes the probability a draw from the empirical distribution is strictly less than $y$ rather than less than or equal to $y$.
	\end{theorem}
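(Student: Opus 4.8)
The plan is to collapse the coverage probability into a statement about the law of $T(\epsilon_{\ical(x)})$ and then read off the answer from the definitions of the quantiles $u_a, u_b$. First I would apply the translation property (Eqn.~\ref{eqn:translation property}), which gives $T(V_{\ical(x)}) = n_x + T(\epsilon_{\ical(x)})$. Substituting this into the two defining inequalities of the interval and cancelling the $n_x$ terms, the membership event $n_x \in [T(V_{\ical(x)}) - u_b,\, T(V_{\ical(x)}) - u_a]$ becomes exactly the sandwich event $u_a \le T(\epsilon_{\ical(x)}) \le u_b$ (note $u_a \le u_b$ since $a \le b$, so the interval is well oriented). The whole problem thus reduces to computing $P(u_a \le T(\epsilon_{\ical(x)}) \le u_b)$.

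Next I would identify the law of $T(\epsilon_{\ical(x)})$, reusing the symmetry/exchangeability argument from the proof of Theorem~\ref{thm:bootstrap estimate}. By symmetry of the hashes over items, $\epsilon_{\ical(x)}$ has the same distribution as the $r$ counters hit by a random zero-count item $y$ with $\ical(y)\cap\ical(x)=\emptyset$, and those counters are precisely draws from the empirical error distribution. Hence, conditional on the realized error counters (treated as the population, as in the bootstrap), $T(\epsilon_{\ical(x)}) \eqdist T(Y_r)$, whose law is by definition $\mathbb{G}$. The previous proof only needed the means to match; here I extend the same argument to match the full (plug-in) distribution, which is what upgrades debiasing to interval coverage.

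The computation then splits the two-sided event as $P(u_a \le T(\epsilon_{\ical(x)}) \le u_b) = P(T(\epsilon_{\ical(x)}) \le u_b) - P(T(\epsilon_{\ical(x)}) < u_a) = \mathbb{G}(u_b) - \underline{\mathbb{G}}(u_a)$, which is exactly the claimed coverage. I expect the delicate step to be this endpoint bookkeeping: because the empirical law $\mathbb{G}$ is discrete and may put positive mass on the atoms at $u_a$ and $u_b$, I must keep the left endpoint closed, including the atom at $u_a$, which forces the strict-inequality c.d.f.\ $\underline{\mathbb{G}}$ there rather than $\mathbb{G}$. Getting this right is precisely what makes the statement an \emph{equality} holding in finite samples for any (possibly atomic) error distribution, rather than an asymptotic approximation; it is the whole content of ``correct finite sample coverage in all situations.''

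Finally, for the trivial corollary asserting validity and tightness, I would invoke the defining properties of the empirical quantiles, namely $\mathbb{G}(u_b) \ge b$ and $\underline{\mathbb{G}}(u_a) \le a$, so that the coverage $\mathbb{G}(u_b) - \underline{\mathbb{G}}(u_a) \ge b - a$, confirming a valid $(b-a)$ confidence interval. Tightness follows because the overshoot decomposes as $(\mathbb{G}(u_b)-b) + (a-\underline{\mathbb{G}}(u_a))$, and each term is bounded by the mass $\mathbb{G}(u_q)-\underline{\mathbb{G}}(u_q)$ that $\mathbb{G}$ assigns to the corresponding quantile point; hence the coverage exceeds $b-a$ by at most the combined atomic mass at $u_a$ and $u_b$, which vanishes as the error distribution is resolved more finely.
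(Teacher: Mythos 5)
Your proposal is correct and takes essentially the same route as the paper's proof: it inverts the translation property $T(V_{\ical(x)}) = n_x + T(\epsilon_{\ical(x)})$ to reduce coverage to $P\bigl(u_a \le T(\epsilon_{\ical(x)}) \le u_b\bigr)$ and then uses symmetry/exchangeability of the errors to identify this with $\mathbb{G}(u_b) - \underline{\mathbb{G}}(u_a)$, the only cosmetic difference being that the paper states this as an equality with $\E\bigl(\mathbb{G}(u_b) - \underline{\mathbb{G}}(u_a)\bigr)$ since $\mathbb{G}$ and its quantiles are themselves random, while you condition on the realized counters. Your endpoint bookkeeping (closed left endpoint, hence the strict-inequality c.d.f.\ $\underline{\mathbb{G}}$ at $u_a$) is precisely the content the paper's statement encodes, so no gap remains.
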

	\begin{proof}
		By symmetry of the errors, $P(T(\epsilon_{\ical(x)}) \in [u_{a}, u_{b}]) = \E( \mathbb{G}(u_b) - \mathbb{\underline{G}}(u_a)) \geq (b-a)$. 
		Substituting $T(\epsilon_{\ical(x)}) = T(V_{\ical(x)})  - n_x$ and rearranging gives the desired result. 
	\end{proof}

\begin{corollary}
	\label{thm:CI tight}
Any shorter interval $[T(V_{\ical(x)}) - \tilde{u}_b, T(V_{\ical(x)} - \tilde{u}_{a}]$
with $[\tilde{u}_a, \tilde{u}_b] \subsetneq [u_a, u_b]$ has coverage strictly less than $b-a$.
\end{corollary}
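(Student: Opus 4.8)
The plan is to prove the stated inequality by reducing the coverage of the candidate interval to a probability under the bootstrap distribution $\mathbb{G}$, and then showing that this probability strictly drops below $b-a$ once either endpoint is moved inward. First I would reuse the opening step of the proof of Theorem~\ref{thm:CI flipped}: by the translation property $T(V_{\ical(x)}) = n_x + T(\epsilon_{\ical(x)})$, the event $n_x \in [T(V_{\ical(x)}) - \tilde u_b,\, T(V_{\ical(x)}) - \tilde u_a]$ coincides with the event $T(\epsilon_{\ical(x)}) \in [\tilde u_a, \tilde u_b]$, and by the same symmetry argument its probability equals $\mathbb{G}(\tilde u_b) - \mathbb{\underline{G}}(\tilde u_a)$. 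Thus the corollary becomes a purely distributional statement about the mass $\mathbb{G}$ assigns to the value-interval $[\tilde u_a, \tilde u_b]$, and it suffices to show $\mathbb{G}(\tilde u_b) - \mathbb{\underline{G}}(\tilde u_a) < b-a$ whenever $[\tilde u_a, \tilde u_b] \subsetneq [u_a, u_b]$.

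Next I would exploit the quantile characterization of the endpoints. Writing $u_q = \inf\{t : \mathbb{G}(t) \geq q\}$, any value $\tilde u_b < u_b$ satisfies $\mathbb{G}(\tilde u_b) < b$ directly from the definition of $u_b$ as an infimum, while any value $\tilde u_a > u_a$ satisfies $\mathbb{\underline{G}}(\tilde u_a) \geq \mathbb{\underline{G}}(u_a)$ by monotonicity, with the slack controlled by the mass $\mathbb{G}$ places on the removed sliver $[u_a, \tilde u_a)$. Since $[\tilde u_a, \tilde u_b] \subsetneq [u_a, u_b]$ forces at least one of $\tilde u_b < u_b$ or $\tilde u_a > u_a$, I would split into these two cases. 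In the upper case the strict drop $\mathbb{G}(\tilde u_b) < b$ is automatic; in the lower case the drop comes from $\mathbb{\underline{G}}(\tilde u_a) > a$. Combining each with the full-interval identity $\mathbb{G}(u_b) - \mathbb{\underline{G}}(u_a)$ of Theorem~\ref{thm:CI flipped} and the endpoint values $\mathbb{G}(u_b) = b$, $\mathbb{\underline{G}}(u_a) = a$ would then give coverage strictly below $b-a$ in both cases.

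The hard part will be justifying the strictness together with the endpoint equalities $\mathbb{G}(u_b) = b$ and $\mathbb{\underline{G}}(u_a) = a$, which is exactly where the discreteness of a raw empirical $\mathbb{G}$ can bite: an atom at $u_a$ makes $\mathbb{\underline{G}}(u_a) < a$ so that the full interval already over-covers and a proper subinterval may retain coverage $b-a$, and a gap in the support at an endpoint lets an inward move cost no mass. The resolution I would use is that the relevant $\mathbb{G}$ is continuous and strictly increasing on its support, precisely the setting in which the error c.d.f.\ $F$ is estimated by the log-concave density estimator of Section~\ref{sec:likelihood}: there every subinterval of positive length inside $[u_a, u_b]$ carries strictly positive $\mathbb{G}$-mass and the quantile identities hold with equality, so each inward move of an endpoint removes a region of strictly positive probability and yields coverage strictly below $b-a$, establishing the corollary as stated. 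I would close by remarking that atoms in $\mathbb{G}$ are the only obstruction and that continuity of the estimated error distribution removes it.
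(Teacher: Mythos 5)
Your first two paragraphs are exactly the paper's argument, only spelled out: the paper's entire proof of this corollary is the single line $\mathbb{G}(\tilde{u}_b) - \underline{\mathbb{G}}(\tilde{u}_a) < (b-a)$, which presupposes precisely the reduction via the translation property and the quantile definitions that you make explicit. Your third paragraph then correctly diagnoses the soft spot that this one-liner glosses over: for an atomic $\mathbb{G}$ the strict inequality can genuinely fail. For instance, if $\mathbb{G}$ puts mass $1/2$ at each of two points $z_1 < z_2$ and $(a,b) = (0.25, 0.75)$, then $u_a = z_1$, $u_b = z_2$, the full interval covers with probability $1$, and the proper subinterval $[u_a, u_b - \eta]$ still has coverage $\mathbb{G}(u_b - \eta) - \underline{\mathbb{G}}(u_a) = 1/2 = b-a$, not strictly less. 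So your instinct that the endpoint identities $\mathbb{G}(u_b) = b$ and $\underline{\mathbb{G}}(u_a) = a$ are the crux, and that atoms and support gaps are the only obstruction, is sharper than what the paper writes down; the paper itself tacitly concedes the phenomenon when it attributes empirical over-coverage to ``discrete jumps in probability in a discrete distribution.''

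The gap is in your resolution. In Theorem~\ref{thm:CI flipped} and this corollary, $\mathbb{G}$ is by definition the empirical bootstrap distribution of $T(Y_r)$ built from the error counters; it is discrete by construction, which is exactly why the paper introduces the separate notation $\underline{\mathbb{G}}$ for the strictly-less-than c.d.f.\ in the first place. The log-concave density estimate belongs to the likelihood-based machinery of Section~\ref{sec:likelihood} and is never used in the bootstrap confidence intervals, so you cannot import its continuity to grant yourself $\mathbb{G}(u_b)=b$, $\underline{\mathbb{G}}(u_a)=a$, and positive mass in every removed sliver: under that substituted hypothesis your case analysis is fine, but it proves a statement about a different distribution than the one the corollary is about. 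A repair consistent with the paper's setting would instead have to add explicit hypotheses on $\mathbb{G}$ itself (the levels $a,b$ exactly attained and each removed sliver carrying positive $\mathbb{G}$-mass, e.g.\ because the endpoint moved past is an atom), or weaken the conclusion to ``no shorter interval can be \emph{guaranteed} coverage at least $b-a$.'' As written, your proof does not establish the corollary in the paper's setting --- though, to be fair, the two-atom example above shows the paper's own one-line proof is asserting something that needs such a qualification too.
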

	\begin{proof}
		$\mathbb{G}(\tilde{u}_b) - \underline{\mathbb{G}}(\tilde{u}_a) < (b-a)$		
	\end{proof}

   Past theoretical work effectively derives an extremely loose power calculation. It finds a sample size $r \times k$ that guarantees an error less than a desired "effect size" $\delta$ with probability at least $1-\epsilon$ for some constants $\delta, \epsilon$. The precise constants needed to compute the needed sketch size depend on the unknown true counts $\mathbf{n}$. 
A simple search procedure allows one to convert power calculations 
which map $(\delta, \epsilon) \mapsto (r,k)$, 
to confidence intervals which map $(r,k,\epsilon) \to \delta$ and vice versa.
In section \ref{sec:tuning}, we examine the power calculations for our tight confidence intervals to find optimal settings of the sketch parameters.

	This confidence bound corresponding to the existing theory is obtained using Markov's inequality.
	$P(\hat{n}_x^{Min} > n_x + c \E \epsilon_1) = 
	P(\epsilon_1 > c \E \epsilon_1)^r 
	\leq c^{-r}.$
	Setting this equal to $\alpha$ 
	gives a $1-\alpha$ confidence interval with width $c \E \epsilon_1= n_{tot}  \alpha^{-1/r} / k $
	where $n_{tot}$ is the total count summed over all items.
	Given a fixed memory budget $B$, this can be expressed as the interval
	$(\hat{n}_x^{Min} - \alpha^{-k/B} n_{tot} / k, \hat{n}_x^{Min} ]$. If one chooses $r,k$ to optimize this interval, the interval width is $-(\log \alpha) n_{tot}/B$.

	This interval does not account for the shape of the count distribution and depends only on the total count $n_{tot}$. As a result, a single heavy hitter can result in an arbitrarily wide confidence interval, even though a vast majority of count items are not affected.
	While more refined analyses that account for the top heavy hitters have been proposed \cite{cormode2005summarizing}, \cite{cormode2012synopses}, these require knowing the heavy hitters or a strong assumption of Zipf distributed counts. Neither is easily estimated or verified from the sketch alone. In contrast, our bootstrap confidence intervals automatically account for the entire shape of the count distribution, including the heavy hitters, and does so with only knowledge that is readily available from the sketch. 

The improvement offered by tight error bounds are significant as the practical performance of the Count-Min and Count sketches often greatly differ from the theoretical bounds \cite{rusu2007statistical}, \cite{minton2014improved}. In the context of inner-product estimation, these bounds yielded error estimates that were over $10^6$ times larger than the true errors, though we see more modest differences in our pointwise queries.
Figure \ref{fig:CI widths} shows the disparity between our empirically driven confidence intervals and the existing Markov inequality based confidence intervals. Especially for heavy tailed distributions, the Markov inequality based intervals are often an order of magnitude larger than our intervals. 

It also shows the actual coverage of the estimators matches or exceeds the desired coverage. The coverage exceeds the desired coverage primarily when the intervals are narrow. In these cases, we verified that excess coverage is due to the discrete jumps in probability in a discrete distribution. Attempts to shorten the intervals yielded insufficient coverage. For example, reducing the intervals by $0.5$ on each side, and effectively turning the interval from a closed to open interval for discrete counts, reduced the empirical coverage for a $90\%$ CI for the MLE estimator from $0.91$ to a less than advertised coverage of $0.88$.
	Thus, the empirical results verify the theory which states they are tight as possible.
\begin{figure}
	\hspace{-0.2cm}	\includegraphics[width=3in, height=2in]{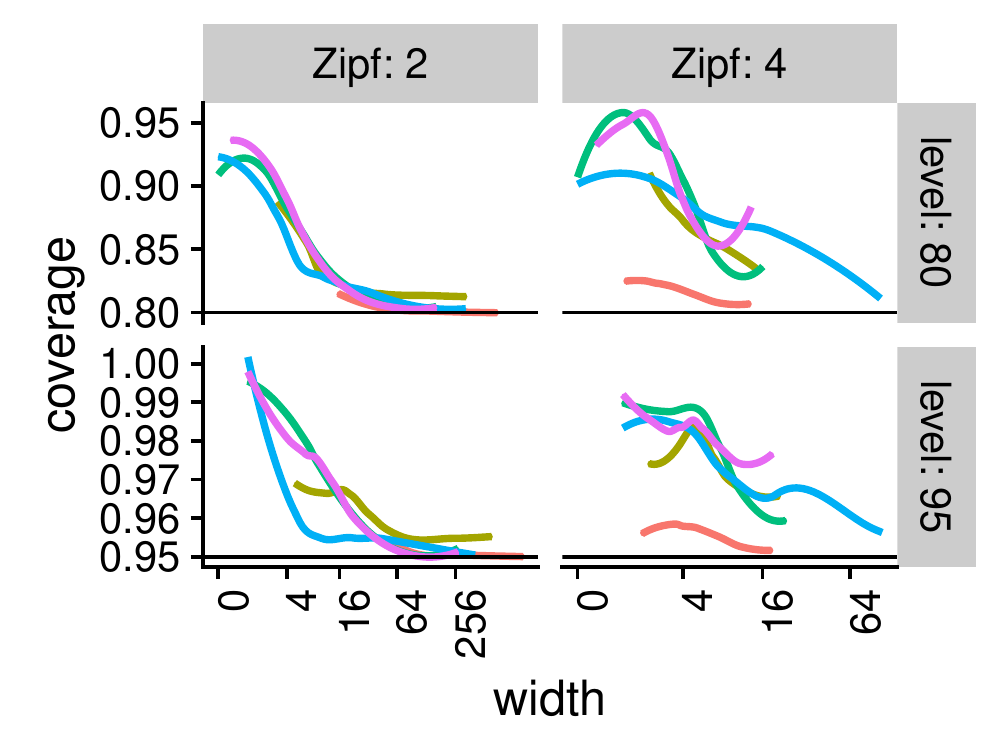} 
		\includegraphics[width=3.4in, height=2in]{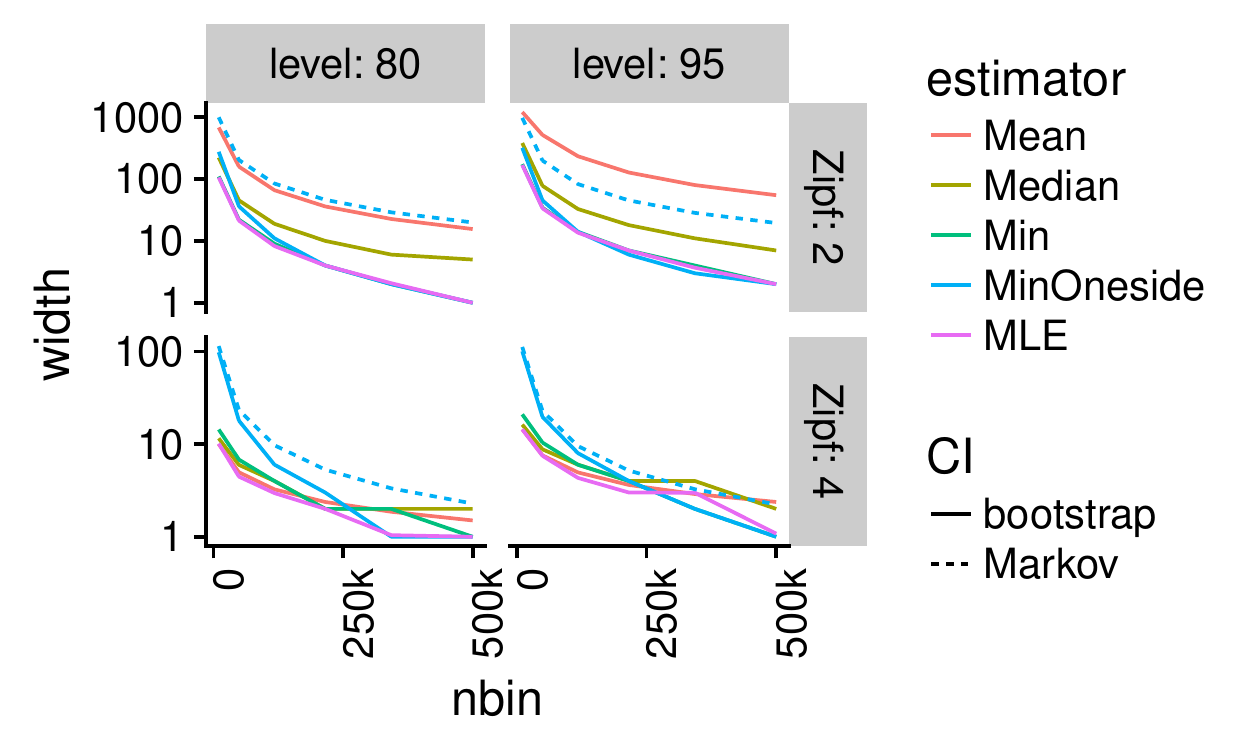}
	\vspace{-0.5cm}
	\caption{Top: The confidence intervals deliver the promised coverage. Overcoverage is due to the discrete nature of the data and probabilites when the accuracy is high. 
	Bottom: The existing probabilistic bound based on Markov's inequality is extremely poor, typically being off by an order of magnitude on a heavy tailed distribution. Except for the "one-sided" interval, all intervals are the two-sided intervals given in algorithm \ref{alg:bootstrap}.} 
		\label{fig:CI widths}
\end{figure}

	\subsection{Computation}
	Bootstrap quantities can pose some computational difficulty as they are typically calculated via Monte Carlo simulation. However, in some cases, the quantities can often be computed directly from the empirical distribution \cite{efron1994introduction}. In particular, the mean and distribution of an order statistic can be easily approximated. The order statistic $X_{(i)}$ of a set of items $X_1, \ldots, X_r$ is the $i^{th}$ smallest value in that set. For example the Min estimator is an order statistic as it is the smallest value in a set of $r$ values. 
	
	This can be done by relating the distribution of the order statistics from $F$ distributed random variables to those of $Uniform(0,1)$ random variables. 
	Recall that the inverse c.d.f. transform generates a $F$ distributed random variable from a $Uniform(0,1)$ random variable via $Y_i = F^{-1}(U_i)$ for $U_i \sim Uniform(0,1)$. Since $F$ is monotone, the order statistic $Y_{(i)} = F^{-1}(U_{(i)})$. The distribution of $U_{(i)}$ is well-known and is $U_{(i)} \sim Beta(i, r-i +1)$. 
	
	When applied to debiasing operations, this gives $\E Y_{(i)} \approx {F}^{-1}(\E U_{(i)}) = {F}^{-1}(i/(r+1))$. In particular, the Min estimator can be debiased using the estimated bias $\mu = \mathbb{F}^{-1}(1/(r+1))$ where $\mathbb{F}$ is the empirical distribution of the errors. More importantly, an {\em exact} confidence interval can be computed directly from $\mathbb{F}$ by using an outer confidence interval \cite{meyer1987outer}. 
	
	For the Min estimator, a "one-sided" $95\%$ confidence interval for the error is $[0, \mathbb{F}^{-1}(b_{95})]$ where $b_{95} = Beta_r^{-1}(0.95)$ is the $0.95$ quantile of a 
	$Beta(1,r)$ distribution. This leads to algorithm \ref{alg:debiased min} which debiases the Min-estimator and provides a confidence interval. We refer to this as a "one-sided" confidence interval since the upper bound cannot be violated.
	A two-sided interval for the Min or any quantile estimator can be similarly estimated. For the  $i^{th}$ order statistic, compute a $1-\alpha$ confidence interval $[a,b]$ for $U_{(i)}$. Theorem \ref{thm:CI flipped} gives that
	$[T(V_{\ical}) - \mathbb{F}^{-1}(b), T(V_{\ical}) - \mathbb{F}^{-1}(a)]$ is a $1-\alpha$ confidence interval for the estimate. For implementation purposes, note that $T$ is the base estimator prior to debiasing.

	Even when the bootstrap quantities cannot be directly computed from the distribution of error counters, they can be computed just once and applied to all count estimates. Since quantiles are always robust and most estimators $T$ that we consider are also robust to large errors, 
	there is little difference in estimating the bias $\mu$ and interval $[u_a, u_b]$ using all counters rather than only the counters that do not contain a given item. This yields algorithm \ref{alg:bootstrap} which debiases an estimator and returns a confidence interval. 
	
		\begin{algorithm}
		\begin{algorithmic}
			\Function{Pre-process Errors}{$T, V, a, b$}
			\For{i=1,\ldots,k}
			\State $Y_i = \{V_{(j,i)}: j=1,\ldots,r\}$ 
			\State $Z_i \gets T(Y_i)$
			\EndFor
			\State Let $\mathbb{G}$ be the empirical c.d.f. of the $\{Z_i\}$
			\State $(u_{a}, u_{b}) \gets (\mathbb{G}^{-1}(a), \mathbb{G}^{-1}(b))$
			\State \Return $\mu = \E_{\mathbb{G}} Z$ and $[u_{a}, u_{b}]$
			\EndFunction			
			\Function{Debiased-Estimator}{$x, T, V, \mu, u_{a}, u_{b}$}
			\State $\ical \gets \{(i,h^{(i)}(x)): i = 1,\ldots, r \}$
			\State $\hat{n}_{raw} \gets T(V_{\ical})$
			\State \Return $\hat{n}_x = \max\{0, \hat{n}_{raw} - \mu\}$ and $[\hat{n}_{raw} - u_{b}, \hat{n}_{raw} - u_{a}]$			
			\EndFunction
		\end{algorithmic}
	\caption{Bootstrap debiasing with Confidence Interval}
			\label{alg:bootstrap}
	\end{algorithm}

	\begin{algorithm}
		\begin{algorithmic}
			\Function{Debiased Count-Min}{$x, V, \ell$}
			\State $\ical \gets \{(a,h^{(a)}): a = 1,\ldots, r \}$
			\State $\hat{n}_{min} \gets \min_{i \in \ical} V_i$
			\State $\mu \gets k^{th} \mbox{ smallest value of } V$ (i.e. $\mathbb{F}^{-1}(1/r)$).
			\State $b \gets BetaCDF^{-1}(\ell, 1, r)$
			\State $u_b \gets (b \cdot r \cdot k)^{th} \mbox{ smallest value of } V $ (i.e. $\mathbb{F}^{-1}(b)$).
			\State \Return  $\hat{n}_x = \max\{\hat{n}_{min} - \mu, 0\}$ and CI $[\max\{\hat{n}_{min} - u_b, 0\}, \hat{n}_{min}]$.
			\EndFunction
		\end{algorithmic}
	\caption{Debiased Min estimator with Confidence Interval}
			\label{alg:debiased min}
	\end{algorithm}

   \section{Likelihood based estimation}
   \label{sec:likelihood}
   For the bootstrapped estimators, the procedures directly resample from the error distribution without estimating the distribution itself. 
   With likelihood based methods, it is necessary to estimate this distribution. By doing so, one is able to apply the statistical machinery for efficient estimation and inference. 
   
   We derive the error distribution and show how to estimate it non-parametrically and without any additional tuning parameters. This allows the easy application of maximum likelihood estimation as well as Bayes optimal estimation. Furthermore, the likelihood based approaches provide a framework for performing joint estimation of counts via regression to obtain even more accurate estimates.

	\subsection{Log-concave density estimation}		
	To ensure good performance under all possible count distributions, we use a non-parametric estimate of the error distribution. We do this under the assumption that the distribution of the log-errors are log-concave. The concavity has the added benefit that the continuous relaxation of the maximum likelihood objective is easily maximized by standard concave maximization algorithms. 
    Furthermore, unlike other non-parametric methods such as kernel density estimation, a log-concave density has a consistent maximum likelihood estimator \cite{dumbgen2009maximum} that requires no tuning of parameters such as the bandwidth.  
	
	Log-concave densities cover many common distributions. These include the Poisson, Binomial, Exponential, Normal, Negative-Binomial, among others. We remark that heavy tailed distributions with probability $f(y) \propto y^{-\alpha}$ for large $y$ have a log density or log mass function that is log-convex in the tails rather than concave.
	In this case, we compute a log-concave projection of the trimmed density which results in linearly decaying tails. As shown in section \ref{sec:robust}, the resulting objective function
	 is a {\em robust} objective which can perform well even when the assumptions are not met. It is similar to Huber's estimator which combines the quadratic loss associated with the mean estimator with the linear loss of the median or other quantiles.
	
	We further note that in many commonly used distributions where the log-concavity assumption is invalid, the density or mass function is  monotone decreasing. Though non-parametric density estimators for decreasing densities exist, it is unnecessary for the purposes of this paper. For a decreasing density with unbounded support,  the Min estimator is the MLE. We make this precise in Theorem \ref{thm:decreasing density} and in Theorem \ref{thm:decreasing projection} which states that the log-concave projection of a decreasing density is decreasing.

We are not aware of precise statements on the computational complexity of the log-concave density estimation algorithms.  However, the final estimate of the log density is always a linear spline. Estimating the density with a spline is an optimization problem with constraints equal to the number of knots. We find that our final solutions typically have a small number of knots, 10 to 40, so that fitting the density is inexpensive.

\begin{theorem}
	\label{thm:decreasing density}
Let $\epsilon_i$ be i.i.d. non-negative random variables from some decreasing density or mass function $f(x)$ with support $[0, \infty)$ or the non-negative integers $\mathbb{N}$. The maximum likelihood estimator for $n$ given $V_i = n_i + \epsilon_i$ is $\hat{n} = \min_i V_i$.
\end{theorem}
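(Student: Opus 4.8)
The plan is to write the likelihood explicitly and exploit the fact that the decreasing shape of $f$ makes it monotone in the unknown parameter on its feasible region. Since the $\epsilon_i$ are i.i.d.\ with density (or mass function) $f$ supported on $[0,\infty)$ (respectively $\N$), and $V_i = n + \epsilon_i$, the likelihood of the observed counters as a function of a candidate count $\theta$ is
\[
L(\theta) = \prod_{i=1}^r f(V_i - \theta),
\]
with the convention $f(u) = 0$ for $u < 0$. First I would note that this convention encodes a hard support constraint: whenever $\theta > \min_i V_i$, at least one argument $V_i - \theta$ is negative, the corresponding factor vanishes, and $L(\theta) = 0$. Hence any maximizer must lie in the feasible region $\theta \le \min_i V_i$, and $\theta \ge 0$ is automatic since $V_i \ge \theta$ forces $\min_i V_i \ge 0$.

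The key step is monotonicity. Throughout the feasible region every argument $V_i - \theta$ is non-negative, and as $\theta$ increases toward $\min_i V_i$ each argument $V_i - \theta$ decreases. Because $f$ is decreasing, each factor $f(V_i - \theta)$ is non-decreasing in $\theta$, so the product $L(\theta)$ is non-decreasing in $\theta$ on $[0, \min_i V_i]$. Consequently the supremum of the likelihood over all admissible $\theta$ is attained at the right endpoint $\theta = \min_i V_i$, which is exactly the Min estimator $\hat n = \min_i V_i$. The same argument applies verbatim in the discrete case by replacing the density with the mass function, since monotonicity is the only property used.

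There is essentially no hard analytic step here; the points requiring care are bookkeeping ones. I would make explicit that ``decreasing'' is meant in the weak (non-increasing) sense, so that $L$ is only guaranteed non-decreasing and $\min_i V_i$ is \emph{a} maximizer rather than necessarily the unique one; uniqueness holds when $f$ is strictly decreasing. I would also emphasize that the one-sided support of $f$ is precisely what rules out larger values of $\theta$: without it the monotonicity argument would push the estimate upward without bound. It is the feasibility constraint $\theta \le \min_i V_i$ together with monotonicity that makes the minimum, rather than some interior critical point, optimal, and this is the structural reason the Min estimator coincides with the MLE exactly when the error density is decreasing.
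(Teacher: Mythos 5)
Your proof is correct and takes essentially the same approach as the paper, whose proof is just the one-line remark that the result ``trivially follows from comparing the likelihood at $\hat{n}$ to any other point'': you have simply made that comparison explicit, observing that the support constraint kills the likelihood for $\theta > \min_i V_i$ while the decreasing shape of $f$ makes each factor $f(V_i - \theta)$ non-decreasing in $\theta$ on the feasible region. Your added remarks on weak versus strict monotonicity (hence non-uniqueness of the maximizer) and on the role of the one-sided support are sound bookkeeping, not a different argument.
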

\begin{proof}
This trivially follows from comparing the likelihood at $\hat{n}$ to any other point.
\end{proof}

\begin{theorem}
	\label{thm:decreasing projection}
	Let $f$ be a probability mass function with finite entropy and $\hat{f}$ be its log-concave projection. It follows that $\hat{f}$ is decreasing. 
\end{theorem}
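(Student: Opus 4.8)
The plan is to use the variational characterization of the log-concave projection. By definition $\hat f$ is the log-concave probability mass function closest to $f$ in Kullback--Leibler divergence, which (writing $\varphi = \log g$) is equivalent to maximizing the cross-entropy functional
\begin{align}
L(g) = \sum_{x} f(x)\log g(x)
\end{align}
over all log-concave pmfs $g$. The finite-entropy hypothesis guarantees $L(f)$ is finite, so $L$ is well defined and bounded above on the feasible set and the maximizer exists and is unique by \cite{dumbgen2009maximum}. I would first record two normalizations used throughout: any maximizer has $\hat f(x)>0$ wherever $f(x)>0$ (otherwise $L=-\infty$), and since $f$ is decreasing its support contains $0$; hence the support of $\hat f$ is an integer interval $\{0,1,\dots,u\}$.

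Second, I would reduce ``decreasing'' to a single inequality. Log-concavity says the log-slopes $s_x = \log\hat f(x+1)-\log\hat f(x)$ are non-increasing in $x$. Therefore, if some $s_x>0$ then $s_0\ge s_x>0$, and conversely $s_0\le 0$ forces $s_x\le 0$ for all $x$. Thus a log-concave pmf fails to be decreasing \emph{only} by being strictly increasing at the very first step, so it suffices to rule out $s_0>0$, i.e. $\hat f(1)>\hat f(0)$.

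Third, argue by contradiction: assume $s_0>0$, let $p\ge 1$ be the first knot of $\hat\varphi$ (so $s_0=\dots=s_{p-1}>s_p$ and $\hat f$ is strictly increasing on $\{0,\dots,p\}$), and consider the log-concavity-preserving competitor $\tilde g(x)\propto \hat f(x)\,e^{\delta\,\mathbf{1}[x<p]}$ that raises $\varphi$ on the initial block. For small $\delta>0$ this is feasible (only the slope at the knot is lowered, and it stays below its neighbour), and a direct computation gives
\begin{align}
\frac{d}{d\delta}L(\tilde g)\Big|_{\delta=0} = F(p-1)-\hat F(p-1),
\end{align}
where $F,\hat F$ are the two c.d.f.'s. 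Optimality of $\hat f$ forces this derivative to be $\le 0$, i.e. $F(p-1)\le \hat F(p-1)$. The contradiction comes from an \emph{independent}, purely structural inequality in the opposite direction: because $f$ is decreasing while $\hat f$ is strictly increasing on the block $\{0,\dots,p\}$, the mass $f$ places on the prefix strictly exceeds the mass $\hat f$ places there, giving $F(p-1)>\hat F(p-1)$. This contradiction shows $s_0\le 0$, hence $\hat f$ is decreasing.

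I expect the main obstacle to be making the structural inequality $F(p-1)>\hat F(p-1)$ rigorous, since the two masses are prefixes of \emph{different} global distributions and shape information alone does not settle it. The block perturbations above generate a family of first-order (KKT-type) conditions asserting that $\hat F$ matches $F$ at the knots of $\hat\varphi$ and lies on the prescribed side in between; the real work is to combine these with the monotonicity of $f$, the strict monotonicity of $\hat f$ on the initial log-linear run, and the mean-matching identity $\sum_x x\,\hat f(x)=\sum_x x\,f(x)$ (itself obtained from exponential-tilting perturbations), while handling the degenerate cases (long initial log-linear segments, finite versus infinite support, and ties). The finite-entropy assumption is used only to secure existence, uniqueness, and finiteness of $L$, and plays no role in the contradiction itself. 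An alternative, should the elementary route prove delicate, is to invoke a shape-preservation property of the log-concave projection from the density-estimation literature and specialize it to the decreasing case.
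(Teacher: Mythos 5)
Your setup matches the paper's: both maximize $L(g)=\sum_x f(x)\log g(x)$ over log-concave pmfs, both reduce the claim to ruling out a strictly increasing initial step, and both perturb at the first knot. But there is a genuine gap exactly where you flag it: your contradiction requires $F(p-1)>\hat F(p-1)$, and this does not follow from the shape information available. From $f$ decreasing you get $F(p-1)\ge p\,f(p-1)$, and from $\hat f$ strictly increasing on the prefix you get $\hat F(p-1)< p\,\hat f(p-1)$, so the prefix-mass comparison would only close if you additionally knew $f(p-1)\ge\hat f(p-1)$, which you do not; the knot-wise KKT conditions and mean-matching identity you gesture at are not carried out, and the literature fallback is not a specific citable theorem. (A secondary omission: you assume a first knot $p$ exists. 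If $\log\hat f$ has no knot it is linear and increasing, hence has bounded support, and this case must be ruled out separately — the paper does so by observing that the uniform distribution on that support attains a strictly larger objective.)

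The paper closes the argument with a different competitor that avoids comparing prefix c.d.f.'s altogether: instead of renormalizing globally, it takes the compensating mass from the knot point $i$ itself, setting $\tilde f(x)=(1+\beta)\hat f(x)$ for $x<i$, $\tilde f(i)=(1-\delta)\hat f(i)$, and $\tilde f(x)=\hat f(x)$ for $x>i$, with $\beta=\delta\,\hat f(i)/\hat F(i-1)$ so that total mass is preserved. This is log-concave for small $\delta$ for the same reason your perturbation is feasible (only the slope into the knot decreases, and the slope out of it increases by $O(\delta)$ against a strict gap), and the first-order change in the objective is $\delta\bigl(F(i-1)\,\hat f(i)/\hat F(i-1)-f(i)\bigr)+o(\delta)$. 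Optimality of $\hat f$ then forces the ratio inequality $\hat f(i)/\hat F(i-1)\le f(i)/F(i-1)$, which two elementary averaging bounds contradict: $f$ decreasing gives $f(i)\le F(i-1)/i$, while $\hat f$ strictly increasing on $\{0,\dots,i\}$ gives $\hat f(i)>\hat F(i-1)/i$. In short, replacing your globally renormalized competitor with this locally compensated one converts your unprovable c.d.f. inequality into a pointwise ratio inequality that the shape hypotheses settle immediately; with that substitution (and the no-knot case handled), your argument becomes the paper's proof.
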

\begin{proof}
	Given in appendix.
	\end{proof}

\subsection{Maximum likelihood estimation}
When the error density $f$ is known, the maximum likelihood estimate (MLE) for the count $n_x$ is given by  
\begin{align}
\hat{n}_x &= \arg \max_\theta \sum_{i \in \ical} \log f(V_{i} - \theta)
\end{align}
where $\ical$ is the set of counters that item $x$ hashes to.

Although the likelihood accounts for shifts in the error distribution, the maximum likelihood estimator is still often biased. However, the estimator is of the form given in section \ref{sec:debiasing}, and hence, it can be fully debiased by the bootstrap procedure in section \ref{sec:bootstrap}. 
Empirical results show this additional debiasing step is important for obtaining the best performing estimator as shown in figure \ref{fig:error}.
 Computation in this case can be moderately expensive, however, as there is no analytic form for the sampling distribution of the estimator, unlike for the Min-estimator. 

\begin{algorithm}
	\begin{algorithmic}
		\Function{Count+MLE}{$x, V$}
		\State $r \gets nrow(V)$
		\State $S \gets \{(a,h^{(a)}): i = 1,\ldots, r \}$
		\State Estimate $\hat{f}$ from $V$ using log-concave density estimation
		\State $upper \gets \min_{i \in S} V_i$
		\State $\hat{n} \gets \arg \max_{\theta \in [0,upper]} \sum_{i \in S} \log \hat{f}(V_i - \theta)$
		\EndFunction
	\end{algorithmic}
\caption{Maximum likelihood estimator}
	\label{alg:MLE}
\end{algorithm}

\subsection{Regression algorithm}
The same maximum likelihood approach can be applied for joint estimation of counts by applying linear regression with the estimated error distribution. The maximum likelihood estimate for a set of items $S$ that are hashed to indices $\ical$ is given by a maximizer of the objective
\begin{align}
\label{eqn:MLE}
\ell_f(\mathbf{\theta}_S) &=  \sum_{i \in \ical} \log f(V_{i} - M_{i, S} \mathbf{\theta}_S). 
\end{align}

When many counts are  jointly estimated, $|\ical|$ may be close to the size of the sketch. In this case, there are few counters containing purely error terms and an estimate of $f$ must also utilize information in $V_\ical$ as well. This turns equation \ref{eqn:MLE} into a joint maximization problem over both $\mathbf{\theta}_S$ and log-concave densities $f$ and requires extending the sum over previously irrelevant counters.
\begin{align}
\label{eqn:joint MLE}
\ell(\mathbf{\theta}_S, f) &=  \sum_{i} \log f(V_{i} - M_{i, S} \mathbf{\theta}_S). 
\end{align}
The maximizer for this objective is known to exist and be consistent \cite{dumbgen2010approximation}.
However, the optimization problem is only necessarily bi-convex. To estimate the maximizer, we alternate between maximizing $\mathbf{\theta}_S$ and $f$.

\begin{algorithm}
	\begin{algorithmic}
		\Function{Regression Count-Min}{$S, V$}
		\State Construct the sparse matrix $M_{\cdot, S}$ according to section \ref{sec:linear alg}
		\State Initialize $\hat{n}_S$ using, for example, the Debiased-Min estimator.
		\While{not converged}
		\State Estimate $\hat{f}$ using the log-concave density estimator on the residuals $V - M_{\cdot, S} \hat{n}_S$.
		\State $\hat{n}_S \gets \arg \min_{\theta_S} \ell_{\hat{f}}(\theta_S)\quad$ (Equation \ref{eqn:MLE})
		\EndWhile
		\EndFunction
	\end{algorithmic}
\caption{Joint count estimation using regression}
	\label{alg:regression}
\end{algorithm}

\subsection{Robust statistics}
\label{sec:robust}
When the data is heavy-tailed,  the estimated log-concave objective mirrors those used in robust statistics. 
In this case, the trimmed log-convex tail of a heavy tailed distribution is projected to a log-concave density. This results in the linear tails in the estimated error log density. These linear tails are extended so that the estimated log density has unbounded support.
Objective functions with such linear tails are robust. They are insensitive to the actual value that an outlier takes. 
For example, consider a continuously differentiable objective $\sum_i \log f(v_i - \theta)$ where the log density $\log f(v) = av+b$ on $[u, \infty)$. The maximizer satisfies $\sum_i (\log f)'(v_i - \theta) = 0$. For an outlier $v_j \gg u$, the derivative $(\log f)'(v_j - \theta) = a$ is constant for all reasonable values of $\theta \in [0, v_j - u]$. Thus, the value of $v_j$ has no effect on the solution beyond the fact that it is $ > u$. For non-differentiable objectives a similar argument applies to subgradients.

Figure \ref{fig:robust} illustrates this by showing the true log mass function for a sketch with an average of $\lambda= 64$ items hashed to each counter with item counts drawn from a $Zipf(3)$ distribution.
This is compared to the corresponding log-concave estimate for on a right trimmed sample for three different levels of trimming. 
The estimated and true distributions match well except at the log-convex right tail. In that region, the estimated distribution linearizes the tail to ensure concavity.
The trimming changes the sensitivity of the resulting objective to large counts. 
It is included since otherwise the the log-projection is not well-defined when there is a log-convex tail with unbounded support. We trim the largest $1\%$ of error values in our experiments.

\begin{figure}
	\vspace{-0.6cm}	
	\includegraphics[height=2in, width=3in]{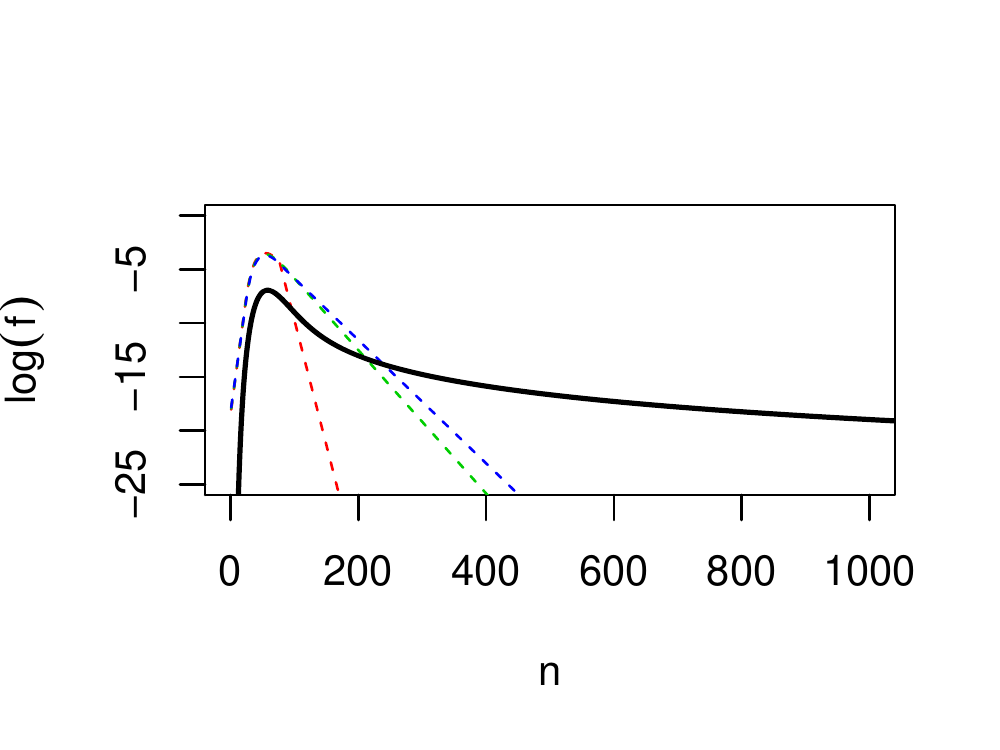}
	\vspace{-0.6cm}	
	\caption{The true mass function for a $Compound-Poisson(64, Zipf(3))$ distribution is compared to trimmed log-concave estimates. The mode and left side of the distribution match well while the right side is linearized to account for the log-convexity of the tail.}	
	\label{fig:robust}
\end{figure}

\subsection{Counter Distribution}
\label{sec:counter distribution}
We derive the exact asymptotic counter distribution given some unknown  parameters. %
The significance of this derivation is that 1) it allows one to understand when the estimation assumptions are reasonable, 2) it allows one to easily compute how the error distribution changes as sketch parameters are changed, and 3) it allows us to make precise the conditions under which our Bayesian estimator is optimal.

Under the assumption that each hash generates a completely random mapping, items are assigned to a counter $V_i$ with very small probability $1/k$. It follows from the Poisson limit theorem that the number of items in each counter is asymptotically $Poisson(\lambda)$ whenever $n/k \to \lambda$ with $k\to \infty$. 

Suppose the true counts have probability mass function $g$, and excluding item $x$, the number of items assigned to counter $V^{(a)}_i$ in replicate $a$ is denoted $W^{(a)}_i$. 
This leads to the following asymptotic observational model for a single replicate in the sketch. When $i = h^{(a)}(x)$,
\begin{align}
W^{(a)}_{i} &\sim Poisson(\lambda) \\
\epsilon^{(a)}_i &\sim g^{*W^{(a)}_i} \\
V^{(a)}_i &= n_x + \epsilon^{(a)}_i
\end{align}
where $g^{*W_b}$ denotes the convolutional power and is the distribution of the sum of $W_b$ i.i.d. $g$-distributed random variables.
The error distribution for $\epsilon^{(a)}_i$ is thus a $Compound-Poisson(\lambda, g)$.
We denote this error distribution by $F$ and its corresponding density or probability mass function by $f = g^{*W_i}$.

In general, neither $\lambda$ nor $g$ are known. 
Rather than estimating them, we directly estimate the error distribution $F$ non-parametrically under an assumption of log-concavity. Sufficient conditions for log-concavity of the error distribution are  provided by Theorem 5.5 in \cite{johnson2013log} which we restate here.

\begin{theorem}[Sufficient conditions for log-concavity]
	Let $g$ be a mass function supported on the positive integers, and $g^{\#}(x) \propto x g(x)$ be the corresponding size-biased measure. A $Compound-Poisson(\lambda, g)$ distribution is log-concave if $g^\#$ is log-concave and $\lambda \geq 2g(2) / g(1)^2 $. 
\end{theorem}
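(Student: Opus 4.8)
The plan is to establish the discrete log-concavity inequalities $p_n^2 \ge p_{n-1}p_{n+1}$ for all $n \ge 1$, where $p_n$ denotes the mass of the $Compound-Poisson(\lambda,g)$ law at $n$. Since $g$ is supported on the positive integers with $g(1) > 0$, every $p_n$ is strictly positive (one can always reach $n$ with $n$ unit jumps), so there are no internal zeros to handle and log-concavity reduces exactly to these inequalities. The central tool is the Katz--Panjer recursion $n\,p_n = \lambda \sum_{j \ge 1} j\,g(j)\,p_{n-j}$, obtained by differentiating the generating-function identity $P(s) = \exp(\lambda(G(s)-1))$ and matching coefficients. Writing $\mu = \sum_j j\,g(j)$ and using $j\,g(j) = \mu\,g^{\#}(j)$, this becomes $n\,p_n = \lambda\mu\,(g^{\#} * p)_n$, which inserts the size-biased measure $g^{\#}$ --- the hypothesis's log-concave object --- directly into the recursion.

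First I would dispatch the base case $n = 1$. A direct computation gives $p_0 = e^{-\lambda}$, $p_1 = e^{-\lambda}\lambda g(1)$, and $p_2 = e^{-\lambda}\big(\lambda g(2) + \tfrac{\lambda^2}{2} g(1)^2\big)$. The inequality $p_1^2 \ge p_0 p_2$ then simplifies, after cancelling $e^{-2\lambda}$ and one factor of $\lambda$, to $\tfrac{\lambda}{2} g(1)^2 \ge g(2)$, that is, to $\lambda \ge 2 g(2)/g(1)^2$. Thus the threshold on $\lambda$ in the statement is not an artifact of the proof technique but is precisely the $n = 1$ instance of log-concavity, and it is exactly what the hypothesis supplies (this also shows why $g(1) > 0$ is implicitly required).

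The remaining work, and the main obstacle, is the inductive step for $n \ge 2$. The natural formulation is to prove that the ratios $r_n = p_n/p_{n-1}$ are non-increasing. Dividing the recursion by $p_{n-1}$ expresses $n\,r_n$ as $\lambda\mu \sum_{j \ge 1} g^{\#}(j)\, p_{n-j}/p_{n-1}$, where each factor $p_{n-j}/p_{n-1}$ is a product of reciprocals of earlier $r$'s controlled by the inductive hypothesis, while log-concavity of $g^{\#}$ makes the ratios $g^{\#}(j+1)/g^{\#}(j)$ non-increasing. The difficulty is that the clean statement ``the convolution of two log-concave sequences is log-concave'' is unavailable here, since it would require $p$ to already be log-concave --- exactly what we are proving --- so the argument cannot simply invoke it. I would instead combine the inductive hypothesis with log-concavity of $g^{\#}$ through a correlation (Chebyshev-sum / FKG-type) inequality applied to the weighted averages appearing in the recursion, which is the route taken in Theorem~5.5 of \cite{johnson2013log}; since the present statement is a restatement of that result, the full induction may be imported once the recursion and base case are in place.
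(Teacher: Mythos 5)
Your proposal is sound where it is explicit, but you should know that the paper offers no proof of this statement at all: it is presented verbatim as a restatement of Theorem~5.5 of \cite{johnson2013log}, so there is no in-paper argument to match. Relative to that, your sketch actually adds content. The positivity observation (unit jumps give $p_n>0$ whenever $g(1)>0$), the Panjer-type recursion $n\,p_n=\lambda\sum_{j=1}^{n}j\,g(j)\,p_{n-j}$ obtained from $P'(s)=\lambda G'(s)P(s)$, and the base-case computation are all correct --- in particular your check that $p_1^2\ge p_0p_2$ reduces exactly to $\lambda\ge 2g(2)/g(1)^2$ is a nice point the paper never makes: the threshold is not slack from a proof technique but is itself the $n=1$ log-concavity inequality, and it exposes the implicit requirement $g(1)>0$. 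You are also right to flag the circularity that blocks the naive route (``convolution of log-concave sequences is log-concave'' would need $p$ log-concave, which is the conclusion); the inductive step via monotone ratios $r_n=p_n/p_{n-1}$ and a correlation-type inequality on the weighted averages in the recursion is indeed the substance of the cited result (essentially Hansen's classical argument for infinitely divisible sequences, which \cite{johnson2013log} builds on), and since both you and the paper ultimately lean on that citation, deferring it is defensible rather than a gap. Two small points: your normalization $j\,g(j)=\mu\,g^{\#}(j)$ silently assumes $g$ has finite mean, which the theorem does not require --- but this is cosmetic, since the recursion sums only over $1\le j\le n$ and log-concavity of the unnormalized sequence $j\,g(j)$ is all that is used, so you should phrase the argument with $j\,g(j)$ directly; and if you wanted a fully self-contained proof you would still need to write out the induction, which is the one piece of real work neither you nor the paper supplies.
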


Note that $g^{\#}(x) = g(x) + \log x + constant$ so log-concavity of $g$ implies log-concavity of $g^{\#}$, and if the underlying count distribution is log-concave, then so is the error distribution for sufficiently large $\lambda$. Of particular note is the Negative-Binomial distribution which can be expressed as a compound Poisson distribution.

In this paper, the most useful property of the compound Poisson distribution is given in lemma \ref{thm:convolution} which states that the distribution resulting from increasing the rate $\lambda$ can be expressed using convolution. The resulting distribution on the interval $[0,\ell]$ can be quickly computed in $O(\ell \log \ell)$ time using a Fast-Fourier Transform. We demonstrate how this can be used to choose appropriate tuning parameters in section \ref{sec:tuning}.

\begin{lemma}
	\label{thm:convolution}
	Let $f$ be the mass function or density of a $Compound-Poisson(\lambda, g)$ distribution.
	Then, $f^{*r}$ is the mass function or density of a 
	$Compound-Poisson(r \lambda, g)$ distribution.
\end{lemma}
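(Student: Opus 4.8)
The plan is to establish the identity at the level of the underlying random variables and then transfer it to densities/mass functions. Recall that $f^{*r}$, the $r$-fold convolution of $f$ with itself, is by definition the density (or mass function) of the sum $X_1 + \cdots + X_r$ of $r$ independent random variables, each distributed according to $f$, i.e. each $Compound-Poisson(\lambda, g)$. It therefore suffices to show that this sum is itself distributed as $Compound-Poisson(r\lambda, g)$.

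First I would write each summand as $X_j = \sum_{i=1}^{N_j} Y_{ji}$, where the $N_j \sim Poisson(\lambda)$ are independent, the $Y_{ji} \sim g$ are i.i.d., and the $Y_{ji}$ are independent of the $N_j$. The probabilistic heart of the argument is the superposition property of the Poisson law: $N := N_1 + \cdots + N_r \sim Poisson(r\lambda)$. Conditioned on the values of the $N_j$, the double sum $\sum_{j=1}^{r}\sum_{i=1}^{N_j} Y_{ji}$ is a sum of $N$ i.i.d. $g$-distributed terms; since the $Y_{ji}$ are exchangeable and independent of the counts, re-indexing yields $\sum_j X_j = \sum_{m=1}^{N} Z_m$ with $Z_m \sim g$ i.i.d. and $N \sim Poisson(r\lambda)$, which is exactly the definition of $Compound-Poisson(r\lambda, g)$.

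To make this rigorous and uniform across the discrete and continuous cases, I would argue via transforms. Let $\phi_g$ denote the characteristic function of $g$. Conditioning on $N_1$ and summing the Poisson series gives the standard formula
\begin{align}
\E e^{it X_1} = e^{-\lambda}\sum_{n \geq 0} \frac{\lambda^n}{n!}\,\phi_g(t)^n = \exp\!\big(\lambda(\phi_g(t) - 1)\big).
\end{align}
Because convolution of densities corresponds to multiplication of characteristic functions, the characteristic function of $f^{*r}$ is $\exp\!\big(\lambda(\phi_g(t)-1)\big)^{r} = \exp\!\big(r\lambda(\phi_g(t)-1)\big)$, which is precisely the characteristic function of a $Compound-Poisson(r\lambda, g)$ law. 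Uniqueness of characteristic functions then gives the claim. For the discrete case, with $g$ supported on $\mathbb{N}$, one may replace $\phi_g$ by the probability generating function and argue identically.

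I do not expect a serious obstacle here; the statement is essentially the additivity of the Poisson rate parameter under convolution. The only points that require care are the justification for interchanging expectation with the infinite Poisson sum (which follows from absolute convergence, since $|\phi_g| \le 1$), and the final appeal to uniqueness of the transform to pass from equality of characteristic functions back to the asserted equality of the density/mass function $f^{*r}$ and that of $Compound-Poisson(r\lambda, g)$.
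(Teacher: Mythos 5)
Your proposal is correct and rests on the same idea as the paper's one-line proof, which simply cites the superposition theorem for Poisson processes: your merging of the counts $N_1 + \cdots + N_r \sim Poisson(r\lambda)$ followed by re-indexing the i.i.d.\ $g$-distributed summands is exactly that superposition argument, spelled out. Your additional characteristic-function verification, $\exp\left(\lambda(\phi_g(t)-1)\right)^r = \exp\left(r\lambda(\phi_g(t)-1)\right)$ together with uniqueness of transforms, is a sound self-contained alternative that the paper leaves implicit, but it is not a different approach so much as a rigorization of the same fact.
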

\begin{proof}
	This follows trivially from the superposition theorem for Poisson processes \cite{kingman1993poisson}.
\end{proof}

\subsection{Bayesian estimation}
Since our procedure produces a likelihood function, it is natural to consider the resulting Bayesian estimator given a prior. In this case, it is possible to make precise statements about the optimality of the estimator.

Given a prior distribution $\pi$ for the unknown count $n_x$ and error density $f$, the posterior distribution for $N_x$ is given by
\begin{align}
p(n_x | V_\ical) \propto \pi(n_x) \prod_{i \in \ical} f(V_i - n_x)	
\end{align}
where $\ical$ is the set of indices $x$ hashes to. 
By simply replacing $f$ with the estimated $\hat{f}$, one obtains an estimated posterior.
Given a loss $L(\theta, n_x)$, the optimal Bayesian estimator is the minimizer
\begin{align}
	\label{eqn:Bayes estimator}
\hat{n}_x &= \arg \min_{\theta} \int L(\theta, n_x) p(n_x | V_\ical) d n_x.
\end{align}

This leads to the optimality result in theorem \ref{thm:bayes optimal}. In simple terms, it states that if the number of replicates and average number of distinct items per counter stays the same but the number of error counters goes to infinity, then the Bayes optimal estimator using the approximate posterior converges to the true optimal estimator in probability.

\begin{theorem}
	\label{thm:bayes optimal}
	Let $\{N_i\}_i$ be a sequence of infinitely exchangeable counts with bounded marginal mass function $g$. Consider a sequence of Count+ summaries on the first $Poisson(d)$ counts where the sketch parameters $r$ is fixed and $k \to \infty$ as $d \to \infty$ such that $d / k \to \lambda > 0$. Let $f_\lambda$ be the mass function of a $Compound-Poisson(\lambda, g)$ and $F_\lambda$ be its c.d.f.. 
	Let $n_x^{opt}$ be the optimal Bayes estimator given in equation \ref{eqn:Bayes estimator} using a bounded loss function and $\hat{n}_x^{est}$ be the estimator using the approximate posterior obtained by estimating $f$ using the maximum likelihood log-concave density estimator and an atomic mass at $0$. Assume $f_\lambda$ is log-concave and has finite entropy. Further assume that the objective
	$J(\theta) = \int L(\theta, y) p(y | V) dy$ has a well separated maximum with probability 1. That is, given the maximizer $\theta_0$, if $J(\theta_i) \to J(\theta_0)$ then $\theta_i \to \theta_0$.  
	Then,
	\begin{align}
		n_x^{est} - n_x^{opt} \stackrel{p}{\to} 0.
	\end{align}
\end{theorem}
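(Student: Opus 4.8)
The plan is to exploit a structural simplification: the two estimators $\hat n_x^{est}$ and $n_x^{opt}$ are built from the \emph{same} prior $\pi$, the \emph{same} query counters $V_\ical$, and the \emph{same} bounded loss $L$, differing only in that the true posterior uses $f_\lambda$ whereas the approximate posterior substitutes the log-concave estimate $\hat f$. Hence, conditionally on the realized counters $V_\ical$, the difference $\hat n_x^{est} - n_x^{opt}$ is a deterministic functional of $\hat f$ alone, and it suffices to show (i) that $\hat f$ converges to $f_\lambda$ at the finitely many arguments relevant to $V_\ical$, and (ii) that this convergence transfers through the $\arg\min$ in equation \ref{eqn:Bayes estimator}. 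I would prove the in-probability statement by showing that for any $\eta>0$ the event $\{|\hat n_x^{est} - n_x^{opt}|>\eta\}$ is contained, up to a vanishing-probability remainder, in the complement of two high-probability events: one on which $V_\ical$ lies in a fixed compact (tightness of the compound-Poisson query counters), and one on which $\hat f$ is uniformly close to $f_\lambda$ on the induced compact range.

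For step (i), I would first invoke the asymptotic counter distribution of Section \ref{sec:counter distribution}: as $k\to\infty$ with $d/k\to\lambda$, the Poisson limit theorem makes the $r(k-1)$ counters not containing $x$ behave like i.i.d.\ draws from the $Compound\text{-}Poisson(\lambda,g)$ law $F_\lambda$, and their number diverges. The atomic mass at $0$ (the empty-counter probability $e^{-\lambda}=f_\lambda(0)$) is estimated by its empirical frequency and converges by the law of large numbers, while the spread-out part is fit by the log-concave maximum likelihood estimator. Since $f_\lambda$ is assumed log-concave with finite entropy and $g$ is bounded, the consistency of the log-concave MLE \cite{dumbgen2009maximum} yields $\hat f\to f_\lambda$, and in particular pointwise convergence at each point of the form $V_i-\theta$ lying in the interior of the support of $f_\lambda$.

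For step (ii), the product $\prod_{i\in\ical}\hat f(V_i-\theta)$ ranges over the \emph{fixed} number $r$ of factors, so pointwise convergence of $\hat f$ upgrades to uniform convergence of the product over the compact parameter range $\theta\in[0,\min_{i\in\ical}V_i]$, and the normalizing constants converge likewise. Because $L$ is bounded, dominated convergence then gives that $\hat J(\theta):=\int L(\theta,y)\,\hat p(y\mid V_\ical)\,dy$ converges uniformly in $\theta$ to $J(\theta)=\int L(\theta,y)\,p(y\mid V_\ical)\,dy$. Combining uniform convergence of the objectives with the assumed well-separated optimum of $J$, I would apply the standard argmax/$M$-estimator consistency theorem (van der Vaart, \emph{Asymptotic Statistics}, Thm.~5.7): the minimizer $\hat n_x^{est}$ of $\hat J$ converges to the minimizer $n_x^{opt}$ of $J$, giving $\hat n_x^{est}-n_x^{opt}\stackrel{p}{\to}0$.

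The hard part will be step (i). The cited log-concave MLE consistency is developed for continuous densities and in global metrics (Hellinger, or uniform convergence on compacta), whereas the error law here is a discrete compound-Poisson mass function carrying an atom at $0$, and I need convergence at specific lattice points $V_i-\theta$. I would separate the atom (handled by empirical frequency) from the continuous-part estimate, reduce the latter to the log-concave MLE of the conditional-on-nonzero law, and then upgrade global consistency to the required pointwise statement using the fact that log-concave estimates converge uniformly on compact subsets of the interior of the support. A secondary wrinkle is that for finite $k$ the error counters are only \emph{asymptotically} independent, since they share hashed items; since $r$ is fixed and the per-counter dependence vanishes as $k\to\infty$, this reduces to a triangular-array argument in which the dependence is asymptotically negligible and does not affect the limit.
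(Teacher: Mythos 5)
Your proposal is correct and its skeleton coincides with the paper's proof: consistency of the estimated density, upgraded to uniform-in-$\theta$ convergence of the approximate posterior and hence (via boundedness of $L$) of the objective $\hat J$ to $J$, followed by the well-separation hypothesis and the M-estimation consistency theorem cited from \cite{vandervaart}. Where you diverge is precisely the sub-step you flag as the hard part, and there the paper takes a shortcut you did not anticipate: rather than splitting off the atom, reducing to the conditional-on-nonzero law, and upgrading the continuous-case consistency of \cite{dumbgen2009maximum} from compacta to lattice points, it directly invokes Theorem 3.4 of \cite{balabdaoui2013asymptotics}, which gives sup-norm consistency $\|\hat f - f_\lambda\|_\infty \to 0$ for the \emph{discrete} log-concave MLE; combined with the elementary observation that $\|f - f_\lambda\|_\infty \to 0$ as $d/k \to \lambda$, this subsumes all the pointwise statements you need at the arguments $V_i - \theta$. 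Relatedly, where you confine $\theta$ to $[0,\min_{i\in\ical}V_i]$ and invoke tightness of the query counters, the paper controls the likelihood product uniformly by truncating the error tail at $u_\epsilon$ with $F_\lambda(u_\epsilon) > 1-\epsilon$ and bounding the density by $M = \max_y f_\lambda(y)$. Your ``secondary wrinkle'' about dependence among counters is a non-issue by design: the theorem sketches the first $Poisson(d)$ items, and this Poissonization makes the per-counter occupancies exactly independent $Poisson(d/k)$, so no triangular-array argument is required. Both routes are sound; yours is more self-contained but longer, the paper's is a short reduction to the discrete-case literature. One caution applicable to both write-ups: uniform convergence of the unnormalized likelihood yields uniform convergence of the normalized posterior only after noting the normalizer is bounded away from zero (here a finite sum over $0 \le n_x \le \min_i V_i$ containing a strictly positive term at the truth), a step both you and the paper elide.
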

\begin{proof}
	Given in the appendix.
	\end{proof}

We note that this optimality result is a strong finite sample result, as only $r$ counters contain an item's count,
 rather than an asymptotic optimality result or an even weaker rate result that is typical in the literature. Only finitely many replicates are observed for each item of interest. 

\section{Asymptotics}
\label{sec:asymptotics}
There is a rich set of work on asymptotics that aids understanding what  makes a count estimator statistically efficient.
The more general form of the estimation problem is to find the true count $n_x$ from a set of observations drawn from a density $f(y - n_x)$ where the error density $f$ has support in $[0, \infty)$.
Such a problem is sometimes referred to as endpoint estimation.
A number of works including \cite{hall1982endpoint}, \cite{woodroofe1974}, \cite{cooke1979statistical} focus on the difficult case where the density vanishes at 0.
More specifically, they consider the case where the density is of the form $f(y) = (y/c)^\alpha (1+O(x))$. In this case, one finds 3 regimes. When $0 < \alpha <1$, so that the density drops off sharply near 0, the Min-estimator is nearly optimal. 
When $\alpha > 1$ so that the density decays slowly near 0, the best possible rate using only points close to the minimum is $o(r^{-1/2})$ and worse than the $\Theta(r^{-1/2})$ rate achieved by the mean. In this case the support constraint provides little value.
When $\alpha = 1$, one can achieve the improved rate $O((r \log \tilde{r})^{-1/2})$
using maximum likelihood estimation on $\tilde{r} = o(r^{2/3})$ items closest to the minimum.

These three regimes are illustrated in figure \ref{fig:truncations} which show the behavior of various estimators under different truncations of a normal distribution. Truncations which are far into the tail of the normal distribution have that the Mean estimator is optimal. Truncations near the mode find that a debiased Min-estimator is near optimal. Truncations in the middle find both the Mean and Min-estimators deviate from the optimal estimator.

\begin{figure}
	\begin{tabular}{cc}
	\includegraphics[height=1.5in, width=1.5in]{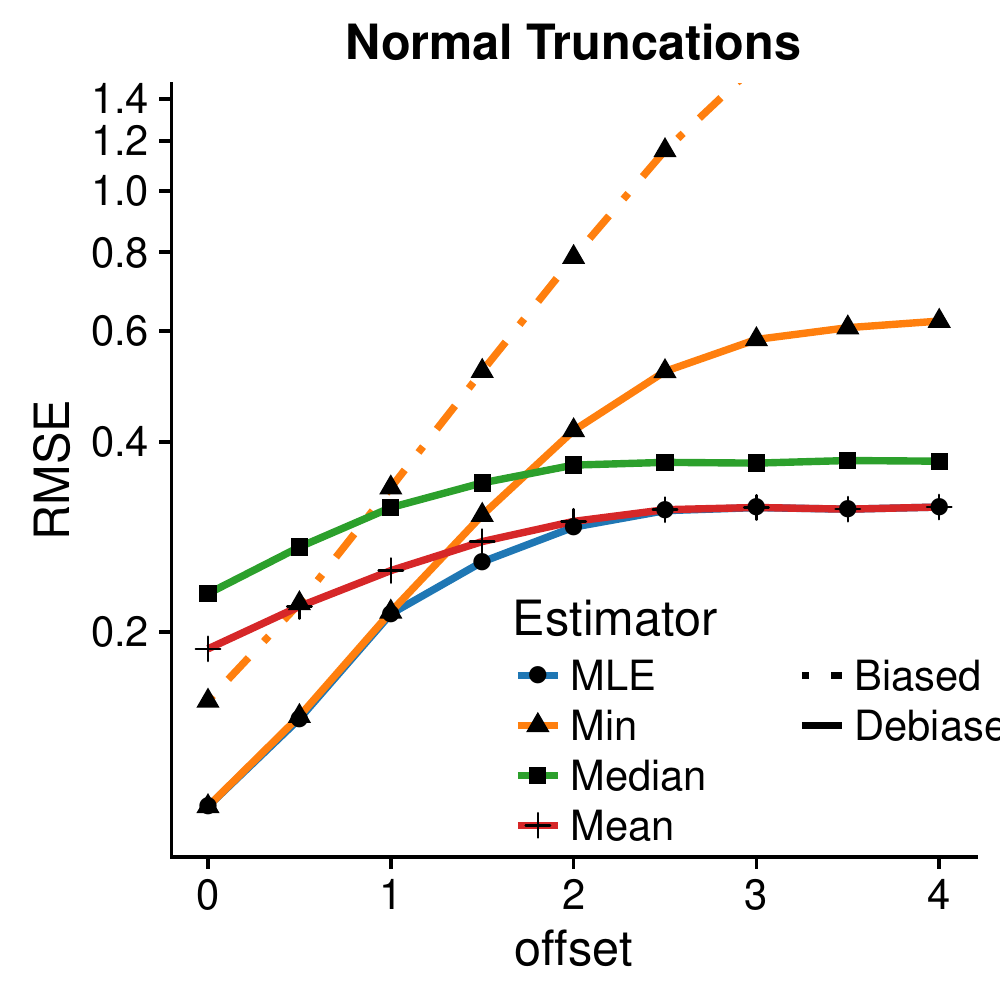} &
	\includegraphics[height=1.5in, width=1.5in]{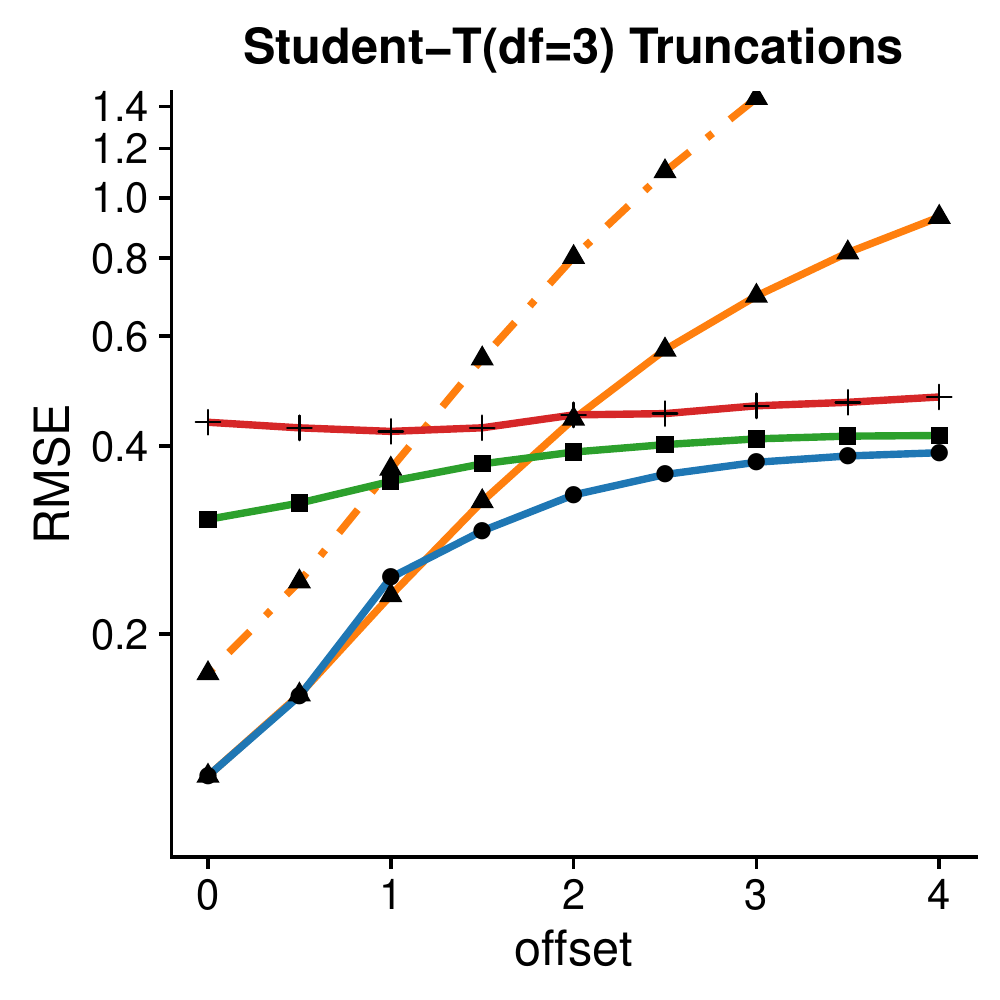}
	\end{tabular}
\vspace{-0.1cm}
\includegraphics[height=0.75in, width=3in]{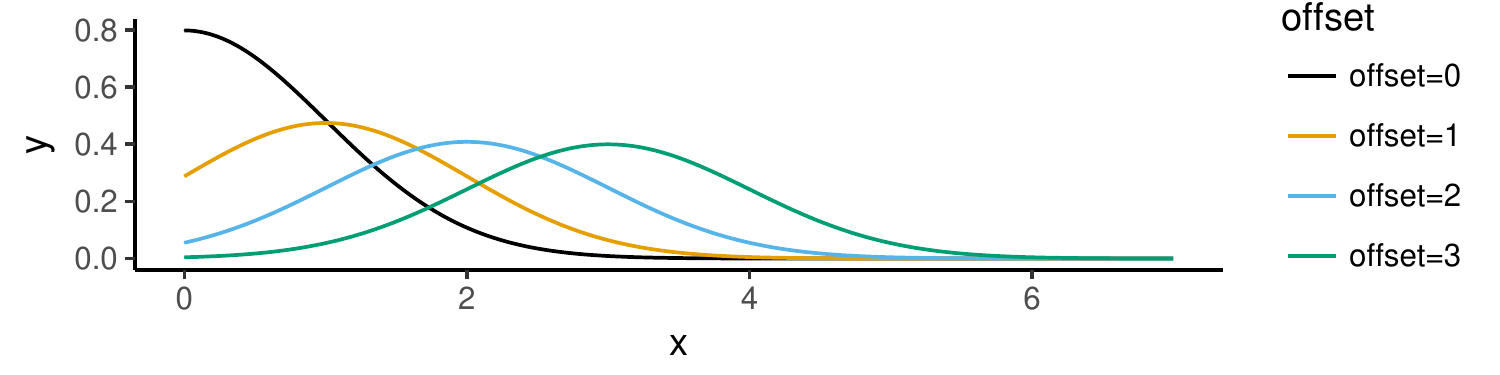}
\vspace{-0.2cm}
\caption{Performance of Debiased estimators under different error distribution shapes. The MLE is always the best or nearly the best estimator. For the normal distribution, the Min estimator is optimal in the regime where the density near 0 is high (lower figure), and likewise the Mean is optimal when the density low. For the t-distribution, none of the simple statistics, including the median, are optimal for larger offsets, and the MLE is able to beat all of them. }
	\label{fig:truncations}
\end{figure}

\section{Tuning sketch parameters}
	\label{sec:tuning}
	Although our methods take the guesswork out of what estimation procedure to choose, the sketch creator must still choose the number of replicates $r$ and the number of counters per replicate $k$, or width. The original Count-Min paper \cite{cormode2005countmin} suggests choosing these to minimize the space required to achieve a desired error guarantee.
	For the guarantee, $P(\hat{n}_x \leq n_x + \epsilon \| \mathbf{n} \|_1) < \delta$, their error bound yields the suggestion $r = \lceil \log (1/\delta) \rceil$ and $m = \lceil e / \epsilon \rceil$. It has been suggested \cite{cormode2012synopses} that typically $r \approx 10-30$ in practice but can be as low as $4$ \cite{cormode2008finding} without obvious ill-effects.  Several industry implementations 
	such as the RedisLabs module \cite{rediscountmin} choose a default of $r=10$.
	
	The previous suggestion finds the smallest sketch that will guarantee a certain confidence level and interval width based on a loose confidence bound. The same can be applied to our tight confidence intervals.
	We demonstrate how this can be done efficiently without trial and error
	by using the counter distribution from section \ref{sec:counter distribution}.

	We first consider the natural case where there is a fixed memory budget $B = rk$, and one desires the smallest interval width. As the asymptotic theory suggests the region where the Min estimator is optimal or near optimal is the best regime, it is sensible to minimize the width of the Min estimator's interval. 
	Let $F_\lambda$ be the distribution function of a $Compound-Poisson(\lambda, g)$ distribution where $g$ is the distribution of item counts, and $Beta_r$ be the distribution function of a $Beta(1, r)$ random variable. 
Given a desired confidence level $\ell$ for the one-sided confidence interval, the choice of $r$ is
\begin{align}
\label{eqn:optimal tradeoff}
\hat{r}_\ell = \arg \min_\rho F_{d \cdot r / B}^{-1} (Beta_\rho^{-1}(\ell)).
\end{align}

This is easily computed from a single  $1 \times B$ Count+ summary.
The summary provides the error distribution $F_{\lambda_0}$ where the rate $\lambda_0 = d/B$ and a corresponding density estimate of $f_{\lambda_0}$. Lemma \ref{thm:convolution} gives that the error distribution for any choice of parameters $r \times k$ can be computed
as the convolutional power $f_{\lambda_0}^{*r}$,
which can be efficiently computed using a Fast-Fourier transform. 
Figure \ref{fig:parameter tuning} illustrates how the interval width changes with $r$ for a range of confidence levels and fixed memory budget. 

Furthermore, the underlying data can be downsampled using coordinated or bottom-k sampling  \cite{cohen2013coordinated} to estimate error distributions with even smaller rates. This allows one to explore the confidence interval widths for a range of sketch sizes as well. 

As an illustration of how this can be applied in a database system, consider the Google N-gram viewer which deals with the canonical natural language processing task of computing counts of n-grams. An n-gram is of a sequence of $n$ words. For example, "An n-gram consists" is  a 3-gram. The number of n-grams and possible pointwise queries is very large. One study \cite{yang2007ngram} found there were on the order of $10^{10}$ unique 5-grams in 100 million English web pages out of which $\approx 10^9$ appeared at least 5 times. Naively tuning parameters is costly. It requires computing a large number of exact counts as well as repeatedly  computing a sketch and estimated counts for a large number of parameter settings.  Our method shows that no true counts need to be computed, the error is obtained by a single quantile calculation, and only one sketch needs to be computed for all parameter settings.

Even when prior information about the error or count distribution is unavailable, the asymptotic theory provides guidance on how to choose the sketch parameters as wider sketches tend to be closer to the "super-efficient" regime where the Min estimator is nearly optimal.

\begin{figure}
	\begin{tabular}{cc}
	\hspace{-0.5cm}\includegraphics[width=1.75in, height=1.5in]{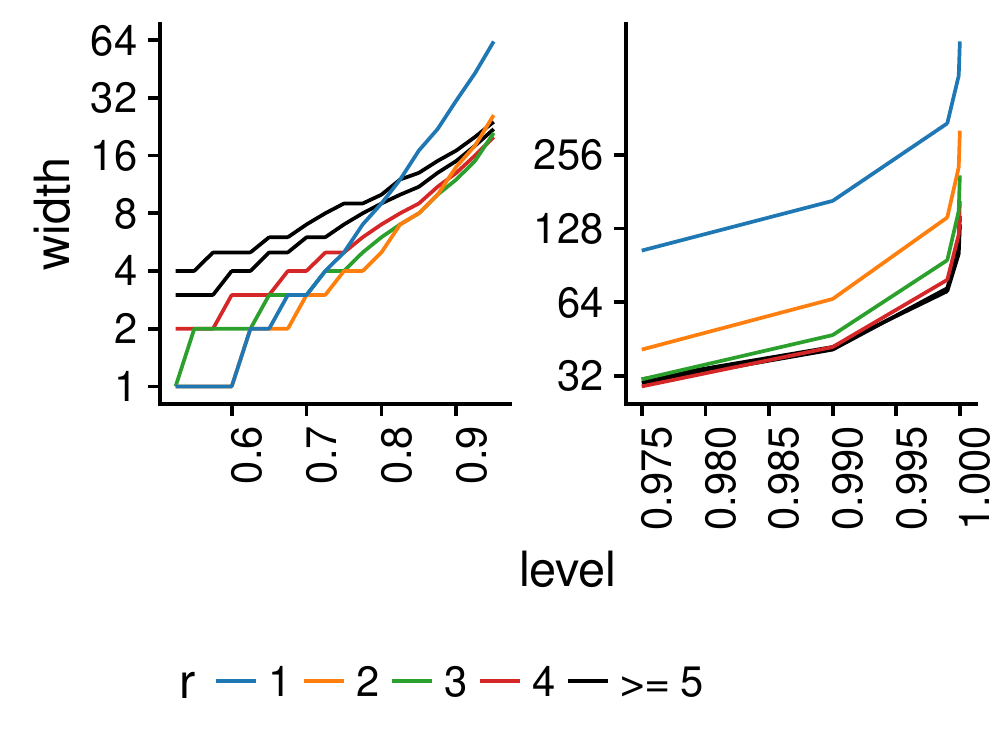} &
	\hspace{-0.2cm}\includegraphics[width=1.75in, height=1.5in]{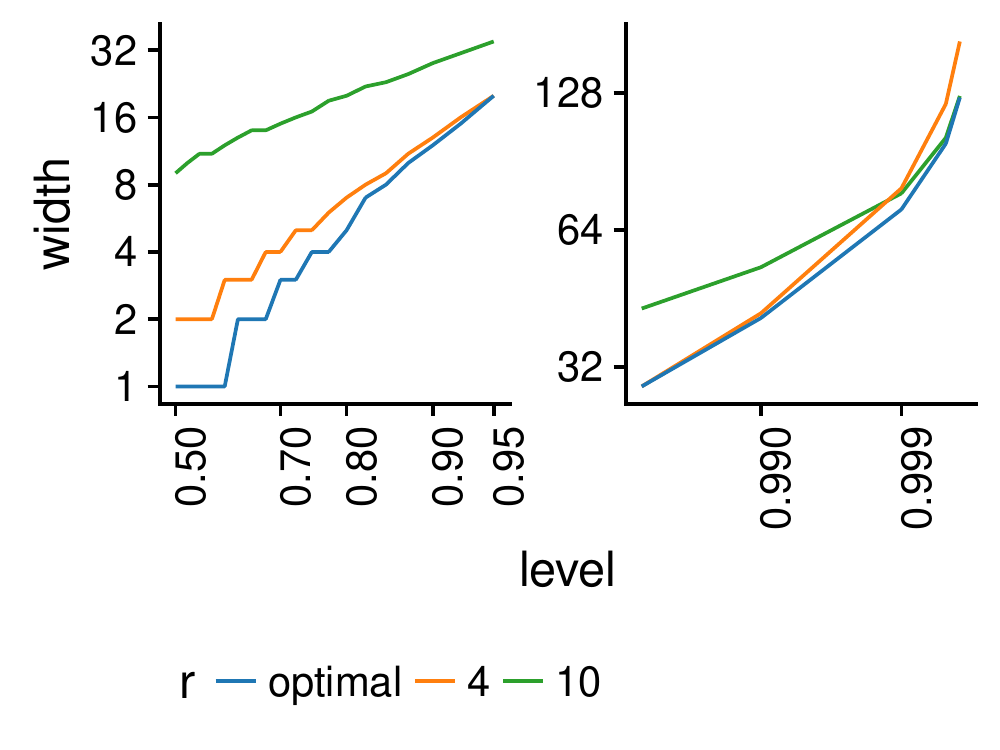}
	\end{tabular}
\vspace{-0.2cm}
\caption{Confidence interval widths for different sketch tuning parameters given a $Negative-Binomial(30, 0.01)$ count distribution and fixed memory budget. Left: Shallow, wide sketches outperform deep sketches except at high confidence levels. Right: Two previously suggested depth settings are compared to the optimized one for each level. Optimized parameters can yield much narrower confidence intervals.}
	\label{fig:parameter tuning}
\end{figure}

\section{Empirical Results}
\label{sec:empirical}
We test our MLE estimator in a variety of real and synthetic situations.
It is shown to match or best other estimators in all situations.
We also empirically show that our confidence intervals provide the correct coverage. A comparison of these tight bounds with prior bounds shows that they are orders of magnitude better.

For synthetic simulations, we use the family of Zipf-Mandelbrot, or discrete power law, distributions. These distributions have probability mass function given by $p(x) \propto (a+x)^{-\alpha}$ on the positive integers.
Here $a$ is some offset that adjusts the mass near $1$ with smaller values having a larger mass at 1, and $\alpha$ controls the tail behavior with smaller values having heavier tails. For $\alpha = 2$, the distribution has infinite variance. We always consider a universe with $d=10^6$ items.

For real world datasets, we used a network and a natural language processing dataset.
For network data, we used the CAIDA Anonymized OC48 Internet Traces dataset \cite{oc48}. In 15 minutes of network traffic there were 21.8 million packets from 1.6 million distinct source addresses and ports. We use a Count+ summary to estimate the number of packets for each source.
For natural language processing data, we used the Google N-grams dataset \cite{michel2011quantitative} for all 2-grams starting with the letters 'ta'. There are 1.4 million distinct 2-grams out of a total of 713 million.

We used the R package {\bf logcondens} \cite{dumbgen2010logcondens} to perform log-concave density estimation though we note there is a corresponding package {\bf logcondiscr} \cite{balabdaoui2013asymptotics} for discrete distribution. Although our data is discrete, we chose the continuous valued density estimation package so that resulting objective function is continuous and can be easily solved by a standard real-valued optimizer.

Although we do not consider timings for our simulation to be representative for practical implementation as R is slow, we report that count estimation for 2000 counts for a sketch of size $8 \times 10^6$ took roughly 4 ms per count on a 2.4Ghz CPU when running on a single thread. On average, each count estimate used roughly 16 evaluations of the objective function when using the function {\bf optimize} which does not make use of known gradient or Hessian information.

To compare the sketches, we use the root mean squared error and the relative efficiency. The relative efficiency of estimator $\phi_1$ to $\phi_2$ on random data $X$ is
\begin{align}
RelativeEfficiency(\phi_1, \phi_2) = \frac{\E\|\phi_2(X) - \theta\|_2^2}{\E\|\phi_1(X) - \theta\|_2^2}
\end{align}
where $\theta$ are the true values being estimated.
For unbiased estimators of real valued $\theta$ this computes the ratio of the variances, and under regular assumptions where the variance scales inversely to sample size, the relative efficiency of $\beta$ represents needing $\beta$ times more data for estimator $\phi_1$ to achieve the same error as $\phi_2$ 

For single count estimation, we compare the following estimators: the Min,  Debiased Min,  Debiased Mean,  Debiased Median,  MLE, and Debiased MLE estimators. 
Of these, the MLE estimators are the only completely new estimators. 
Other estimators benefit from our computational simplification when applicable. For all these estimators, the tight confidence intervals are from our new bootstrap procedure.
For each sketch, we estimate the counts for the top 2000 heavy hitters. In simulations, the sketch sizes range in depth from 2 to 16 replicates and width from $10^4$ to $5\times10^5$ counters per replicate. 
Figure \ref{fig:error} shows the empirical error and efficiency under the real and synthetic scenarios. The debiased MLE estimator is clearly the best estimator under all scenarios. 

\begin{figure*}
	\begin{multicols}{2}
\begin{tabular}{cc}
\hspace{-0.2cm}	\includegraphics[width=3.4in, height=5in]{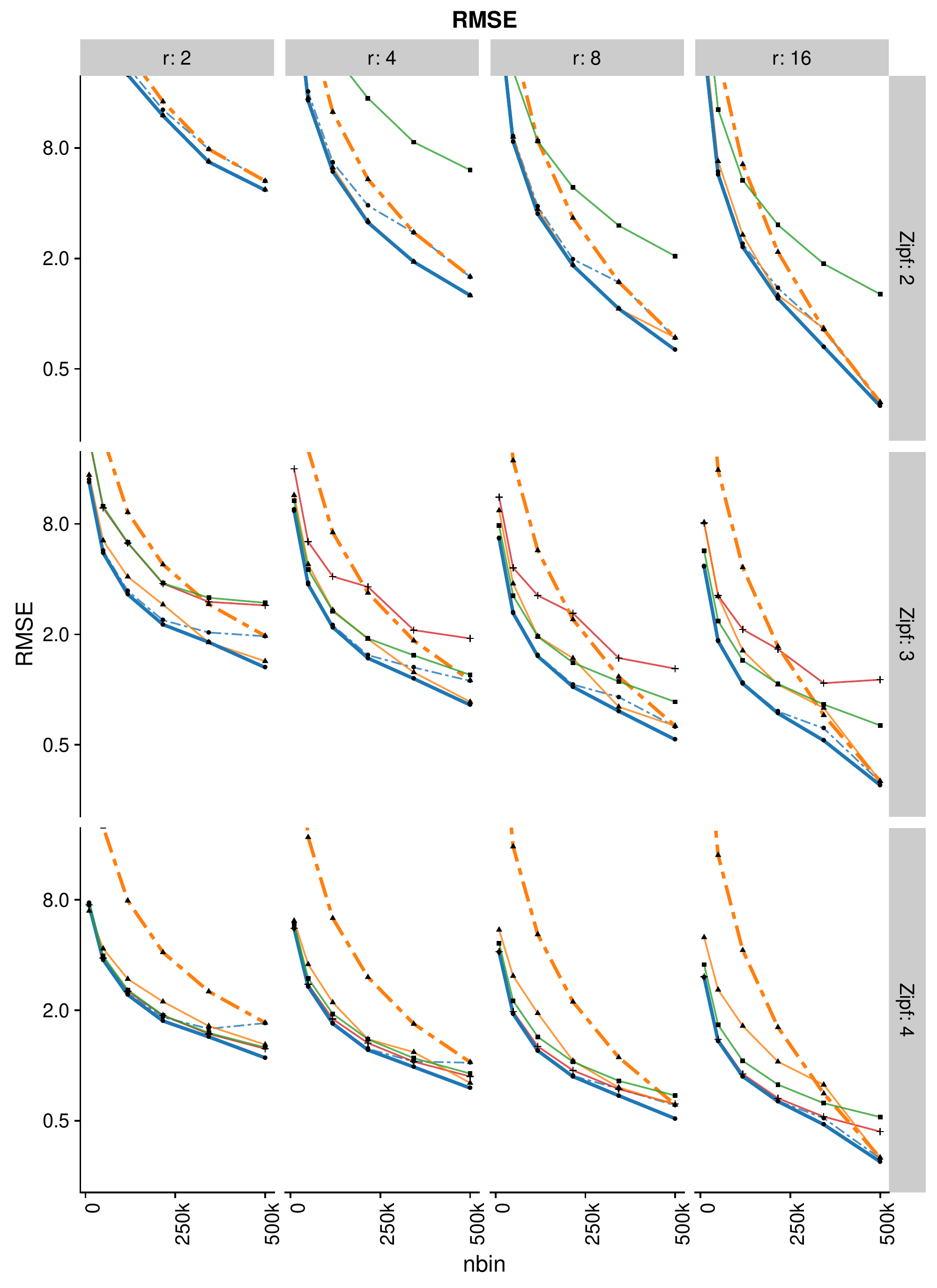} 
&
	\includegraphics[width=3.4in, height=5in]{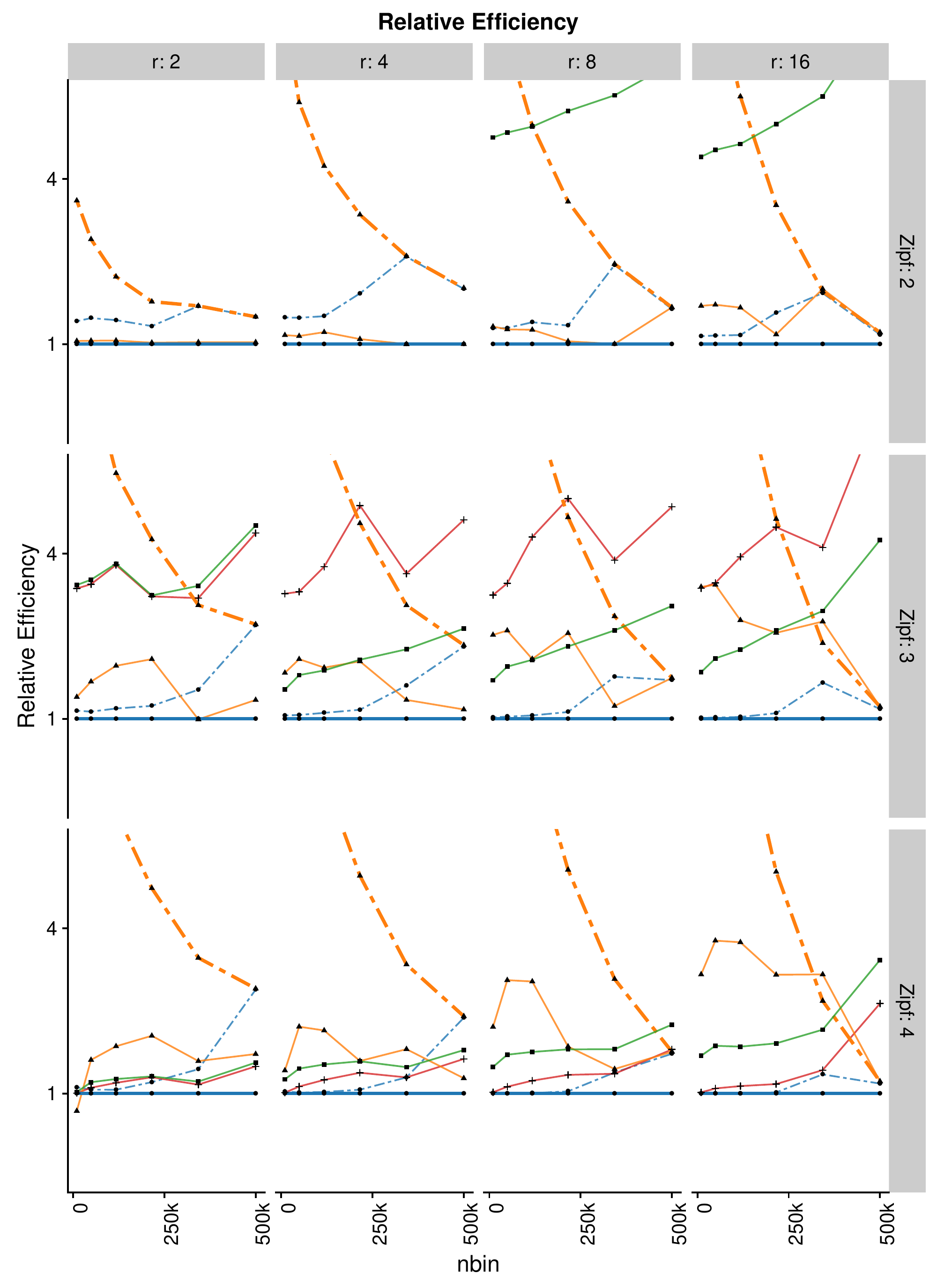} 
\end{tabular}
\begin{tabular}{ccccc}
\hspace{-0.2cm}	\includegraphics[width=1.4in, height=1.75in]{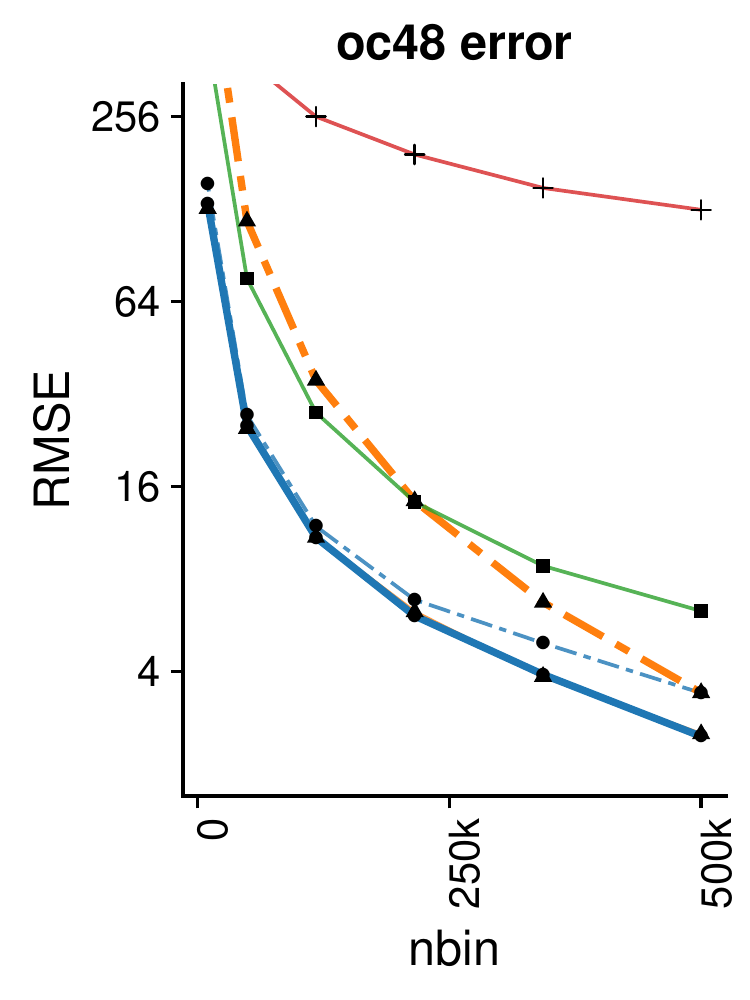} &
	\includegraphics[width=1.4in, height=1.75in]{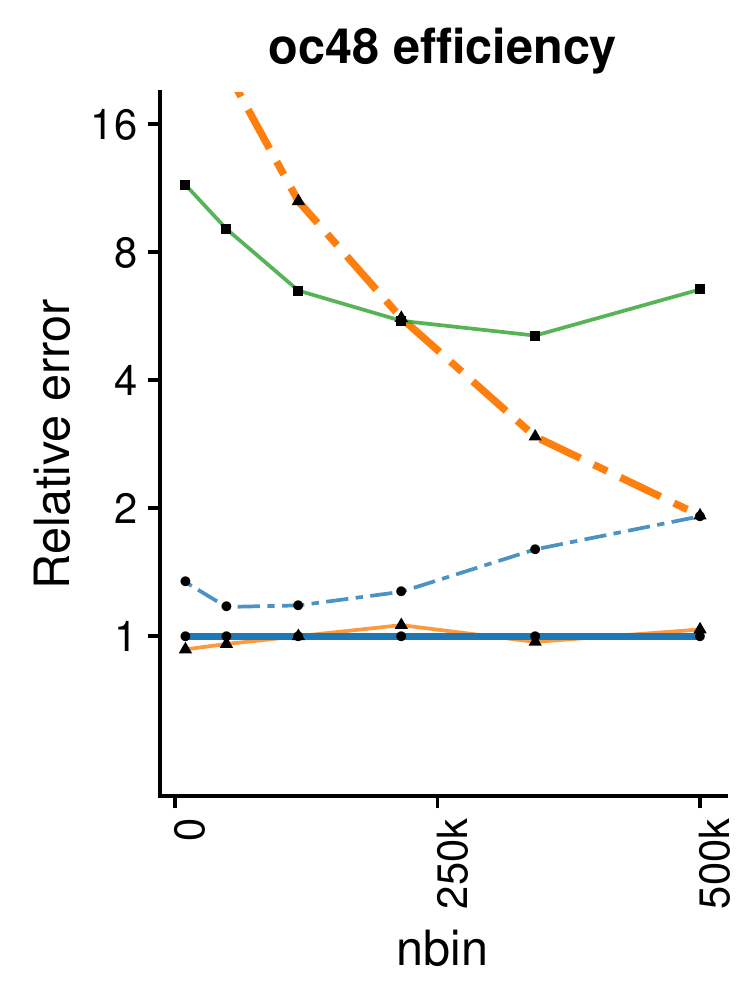} & 
\hspace{-0.2cm}	\includegraphics[width=1.6in, height=1.75in]{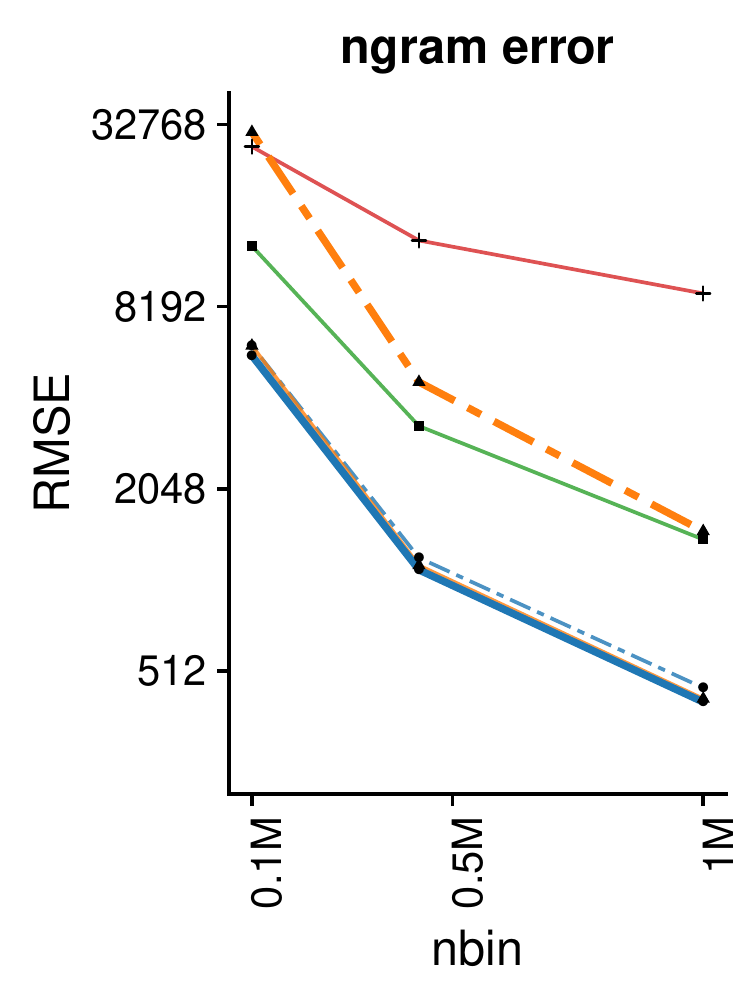} &
	\includegraphics[width=1.4in, height=1.75in]{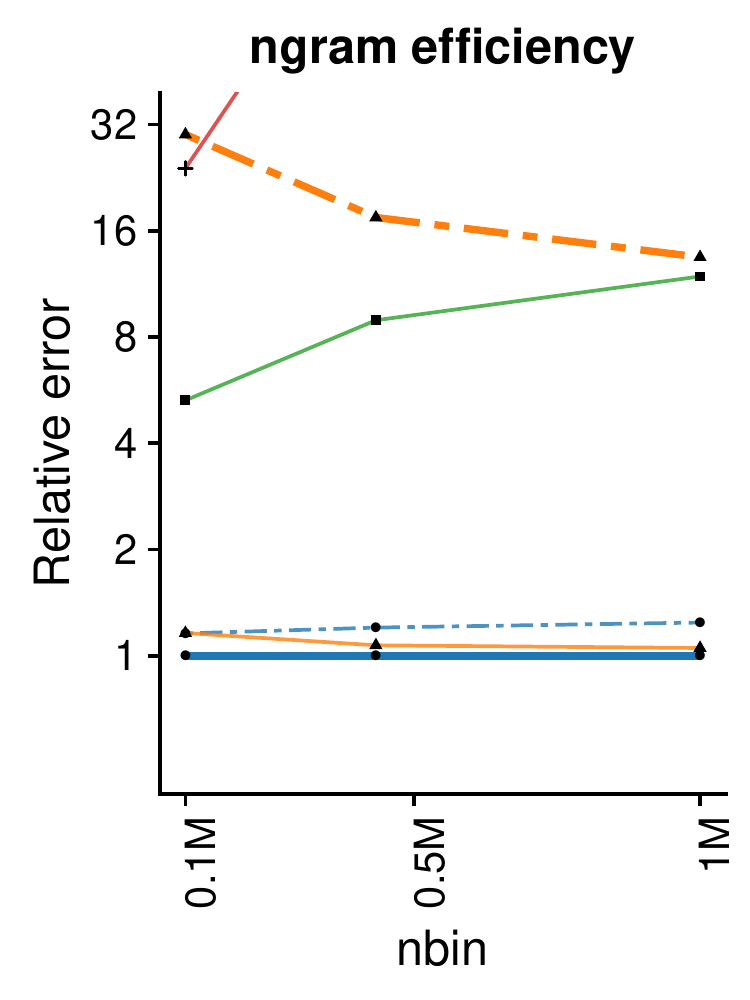} &
	\includegraphics[width=1.2in, height=1.75in]{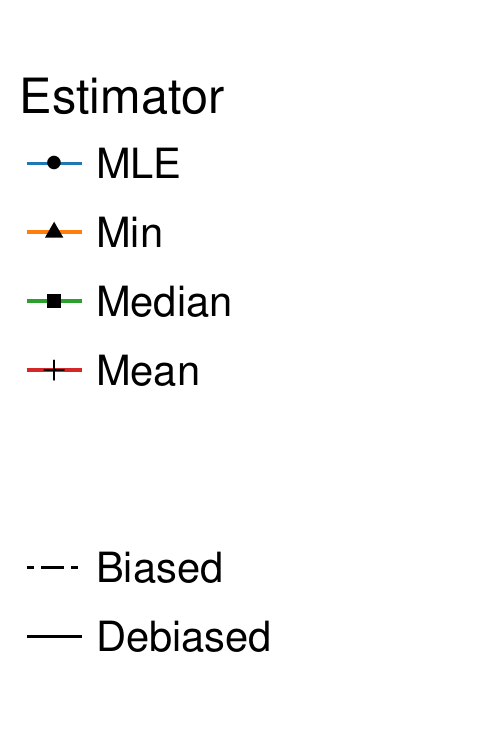}
	\end{tabular}
	\end{multicols}
\caption{The figures on the top show the performance of different estimators over a range of distribution skews and sketch parameters while the bottom figures are on real world datasets.
	The Debiased MLE estimator is the most accurate estimator in all scenarios. The Debiased Min estimator is competitive when there are heavy tails and particularly in the real datasets, but the basic Count-Min estimator (orange dashed) is significantly worse than the Debiased MLE (solid blue) estimator. For the n-gram data, the relative efficiency of the Debiased Min estimator is 1.1 times worse than the Debiased MLE estimator. Although there is no perceptible difference on the oc48 data, the difference becomes more pronounced when counting distinct ip flows (Figure \ref{fig:oc48distinct} in the appendix).
	Estimators that do not appear on the plots (such as the Mean estimator on the $Zipf(2)$ data) have very large errors that do not fit within the axes.} 
	\label{fig:error}
\end{figure*}

Figure \ref{fig:CI widths} shows the coverage of the corresponding confidence intervals for each of the estimators. They match the desired confidence levels at all levels in a multitude of settings.
The resulting error bounds are orders of magnitude better than those available from theoretical analysis.

\subsection{Regression results}
For our regression simulation, we use a small sketch with width $1024$ counters like \cite{lee2005improving} and depth 4. The count distribution is $Zipf(2)$. 
We note that this choice of small sketch is to highlight the limited regime in which regression will significantly improve the count estimator, namely the regime in which items of interest have hash collisions with the known heavy hitters. As shown in \cite{lee2005improving} and figure \ref{fig:regression}, regression yields almost no effect when the sketch is wide relative to the number of items of interest. When collisions with other heavy hitters are likely, then the error can be substantially reduced.

\begin{figure}
	\includegraphics[height=1.75in, width=3.4in]{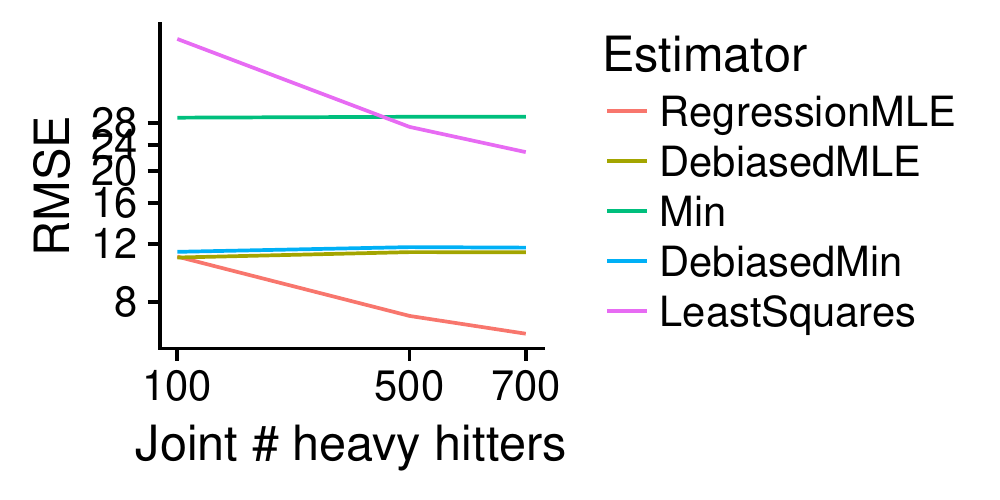}
	\caption{The RMSE per item decreases as more heavy hitter counts are jointly estimated for the regression techniques. A poor least squares objective for a $Zipf(2)$ count distribution yields inaccurate estimates that are always worse than the Debiased MLE and Min estimators despite taking advantage of additional information that can account for over 45\% of the total count. For small numbers of items being jointly estimated, regression based techniques are not useful.}
	\label{fig:regression}
\end{figure}

 \section{Discussion}
 We discuss the applicability of our techniques to other counting sketches and inner product estimation and address computational issues that arise with using empirical error distributions and likelihood based estimators.
 
\subsection{Application to other counting sketches}
The same idea of empirically estimating an error distribution to improve count estimation can be applied to other counting sketches and modifications of the Count+ summary. It is straightforward to apply to linear sketches such as the Count sketch \cite{charikar2002countsketch} and modifications to the Count+ summary that preserve linearity, for example the time adaptive Ada-sketch \cite{shrivastava2016time}. 

We note, however, that there is little reason to prefer the Count summary over the Count+ summary for pointwise queries when using our likelihood based estimators. 
Prior results \cite{deng2007new} show that the accuracy of the Debiased Median estimates closely matches the Count sketch estimates.
The Count summary is the same as the Count+ summary except item $x$'s counter is randomly incremented by either $-n_x$ or $n_x$ rather than $n_x$. Thus, the error terms are necessarily more noisy than those in a Count+ summary, and estimation should not be expected to be better when exploiting the full likelihood.

Several other modifications can be described as random non-linear transformations of an underlying linear Count+ summary.
These include the Count-Min-Log sketch \cite{talbot2009countminlog}, \cite{pitel2015count} which uses approximate counter to save space
and a proposed sketch to replace the simple additive counters with approximate distinct counters \cite{cormode2009forward} .
The same idea of empirically estimating an error distribution and applying statistical estimation techniques can be applied to improve estimation. The resulting estimators are more complex as  the observed counter values cannot be used directly. Computing the likelihood requires integrating over the error distribution.

Non-linear sketches such as  the Conservative Update Count-Min sketch
 result in summaries where the error terms are no longer exchangeable. The irrelevant counters for an item are not necessarily informative of the error distribution in the relevant counters. The conservative update modification updates only the smallest of the $r$ counters that an item hashes. This substantially reduces the raw magnitude of the error vector and potentially improves performance in the regime where the biased Min-estimator is nearly optimal.
  However, in other regimes, the error will still grow linearly as $O(\lambda)$
  since there is no debiasing operation. In contrast, the error in the Count+MLE estimator will grow with the standard deviation $O(\sqrt{\lambda})$. Furthermore, there is no procedure for generating tight confidence intervals when using conservative updates.

\subsection{Computational complexity and Application to streaming settings}
Thus far, estimation of the empirical error distribution  has been assumed to have manageable computational cost. This is aided by the fact that if a sketch does not change, then the error distribution only needs to be estimated once. This may not be the case in streaming settings. Furthermore, in extremely high throughput situations, the maximum likelihood estimator may also be relatively expensive to compute in comparison to simple estimators like the Min, Mean, and Median. 

These problems may be alleviated in two ways. First, the estimated error distribution can be updated infrequently. If the empirical distribution is updated only when it can differ by $\delta$ so that $\| \mathbb{F}_n - \hat{F}_{current}\|_\infty < \delta$, then the number of times the estimated error distribution is updated is logarithmic in the stream size.
The amortized cost of adding a count to the sketch goes to 0. 
Second, rather than using the MLE estimator, the tight error bounds can be used to periodically select the best simple estimator. Thus, the estimator can smoothly transition from the regime where the Min estimator is optimal to ones where the Mean or some quantile estimator is optimal.

	\subsection{Inner products}
	Inner product estimation is somewhat more challenging than item count estimation. All counters contain relevant item counts
	so the error distribution cannot be simply gleaned from unused counters.
	We provide a means to generate an approximate error distribution but do not evaluate this procedure in this paper as we regard it as a substantial separate topic.
	
	Given two sketches $\mathbf{U}, \mathbf{V}$ for true count vectors $\mathbf{m}, \mathbf{n}$, the naive Count-Min inner product estimate for $\mathbf{m}^T \mathbf{n}$ is $\min_a U^{(a)T}V^{(a)}$. It has been empirically shown to perform well when the data is highly skewed \cite{rusu2007statistical}, but in cases where there is a heavy tail, the bias in the estimate is large, and it can perform an order of magnitude worse than other methods. 
	An unbiased estimator is given by \cite{thorup2004tabulation} which empirically performs similarly to the AMS sketch \cite{alon2005estimating}. The AMS sketch is the same as the Count sketch but with an inner-product estimator instead of a item count estimator. 
	
	To find an error distribution, expand the product of two counters to identify the form of the error. Let $i$ be some index and $S_i$ be the set of all items that hash to that index. This error in the product is given by
	\begin{align}
	U_i^{(a)T} V_i^{(a)} &= \sum_{x \in S_i} m_x n_x + \epsilon_i^{(a)} \\
	\epsilon_i^{(a)}  &= \sum_{x,y \in S_i, x \neq y} m_x n_y.
	\end{align}
	The error is the sum of $|S_i|^2$ pairs of counts where  $|S_i| \sim Poisson(\lambda)$ and for each item pair $(X,Y)$, $X$ and $Y$ are drawn independently, though there is dependence between pairs.
	An imperfect surrogate of the error distribution can be obtained by multiplying random counters in the sketch. For indices $i,j$,
	\begin{align}
	U_i^{(a)} V_i^{(a)} = \sum_{x \in S_i, y \in S_j} m_x n_y.
	\end{align}
	This ensure the number of pairs is approximately correct when $\lambda$ is large and preserves part of the dependence structure between pairs.
	Rather than explicitly constructing a sample, this error distribution can be computed by estimating a distribution for log counter values and taking the convolution.

\section{Conclusion}
This paper addresses a number of practical problems for counting sketches and advances our understanding of the mechanisms by which they work. We provide two distinct primary contributions. 1) We give the first method that produces practical and tight error estimates for a pointwise query, and 2) we derive improved and optimal estimators that make full use of the information contained in the sketch. 
	Besides their immediate contributions to counting sketches, we show they help solve other problems facing a practitioner including which sketch and which count estimator to use and how to select optimal sketch tuning parameters.  
	
	\bibliographystyle{abbrv}
	\bibliography{ling2}
	
	\appendix
	\section{Proofs}
	
	Theorem \ref{thm:bayes optimal}
	\begin{proof}
		Since $d/k \to \lambda$, it follows that the error distribution $\|f - f_\lambda\|_\infty \to 0$.
		By Theorem 3.4 in \cite{balabdaoui2013asymptotics},
		$\| \hat{f} - f_{\lambda} \|_\infty \to 0$.
		For any $\epsilon >0$, choose $u_\epsilon$ such that $F_\lambda(u_\epsilon) > 1- \epsilon$, and take $M = \max_y f_\lambda(y)$.
		Hence, for any $\delta > 0$, 
		the approximate likelihood $\left\| \prod_{i=1}^r \hat{f}(V_i - \theta) - \prod_{i=1}^r f(V_i -\theta)  \right\|_{\infty} < \delta + p(\max_{i=1, \ldots, r} \{V_i\} > u_\epsilon) < \delta + (M + \delta)^r (1-(1-\epsilon)^r)$ with probability $< 2 \epsilon$ eventually.
		Thus, the approximate posterior uniformly converges to the true posterior in probability, $\| \hat{p}(y | V) - p(y | V)\|_{\infty} \stackrel{p}{\to} 0$. 	
		Boundedness of the loss function ensures uniform convergence of the objectives 
		$\left\| \hat{J}(\theta) - J(\theta) \right\|_\infty \stackrel{p}{\to} 0$ where
		and $\hat{J}$ is similarly defined on the approximate posterior. 
		The well-separation gives the desired convergence in probability of the maximizers by the M-estimation consistency theorem \cite{vandervaart}.
	\end{proof}
	
	Theorem \ref{thm:decreasing projection}
	\begin{proof}
		The log-concave projection $\hat{f}$ is the maximizer of $J(g) = \int f \log g$ over log-concave mass functions.
		Assume $\hat{f}$ is not decreasing. Without loss of generality assume, the left endpoint of the support of $f$ is 0.
		Let $i$ be the smallest value such that $\hat{f}(i)^2 > \hat{f}(i-1)\hat{f}(i+1)$.	
		Such a value must exist since otherwise $\log \hat{f}$ is linearly increasing with bounded support. Since the uniform distribution is log-concave and attains a higher objective value, $\log \hat{f}$ cannot be linearly increasing.
		
		For $\delta \in (0,1)$, define $\tilde{f}(x) = \hat{f}(x) (1-\delta)$ if $x = i$, $\hat{f}(x)$ if $ x > i$, and $\hat{f}(i)(1+\beta)$ if $x < i$ where $\beta = \hat{f}(i) / \hat{F}(i-1) \delta$. It is easy to verify that $\tilde{f}$ is a probability mass function, and that it is log-concave on $[0, i-1]$ and $[i+1, \infty)$. We will verify that it satisfies the condition for log-concavity at $i$ for sufficiently small $\delta$. 
		$2\log \tilde{f}(i) = 2\log \hat{f} + 2\log(1-\delta) > 
		\log \hat{f}(i-1) + \log \hat{f}(i+1) + 2\log(1-\delta) = 
		\log \tilde{f}(i-1) + \log \tilde{f}(i+1) - \log (1+\beta) + 2\log(1-\delta)
		= \log \tilde{f}(i-1) + \log \tilde{f}(i+1) - O(\delta)$.
		It follows that for small enough $\delta$, the condition 
		$2\log \tilde{f}(i) \geq \log \tilde{f}(i-1) + \log \tilde{f}(i+1)$ holds.
		We can now show that $\tilde{f}$ also attains a higher objective value for small enough $\delta$.
		$J(\tilde{f}) - J(\hat{f}) = \sum_x f(x) (\log \tilde{f}(x) - \log \hat{f}(x)) =
		\sum_{x < i} f(x) \log (1+\beta) + f(i) \log(1-\delta) = F(i-1) \log(1+\beta) + f(i) (1-\delta) = F(i-1) \frac{\hat{f}(i)}{\hat{F}(i-1)} - f(i) \delta + o(\delta)$.
		Since $f$ is decreasing and $\hat{f}$ is stricty increasing on $[0,i]$,
		$f(i) \leq F(i-1) / (i-1)$ and $\hat{f}(i) > \hat{F}(i-1) / (i-1)$. Taking $\epsilon = \hat{f}(i) / \hat{F}(i-1) - 1/(i-1)$,
		it follows
		that $J(\tilde{f}) - J(\hat{f}) > F(i-1) \epsilon  + o(\delta)$ which is $> 0$ for sufficiently small $\delta$. 
		Thus any non-decreasing $\hat{f}$ can not be a maximizer.
	\end{proof}

\subsection{Counter Braids}
\label{sec:appendix counter braids}
When the entire universe of items is known, the true counts must be in the feasible set $\Theta = \{\theta >0 : \mathbf{M} \theta = \mathbf{V} \}$. This set is expensive to compute when the number of counters is large. The counter braids estimator instead keeps track of upper and lower bounds for the feasible set so that 
\begin{align}
\tilde{n} &\leq  \mathbf{n} \leq \hat{n}.  \label{eqn:cb 1}
\end{align}
It follows that
\begin{align}
0 \leq V - M \tilde{n} &= M (\mathbf{n} - \tilde{n}) \label{eqn:cb 2}\\ 
0 \leq M \hat{n}  - V &= M (\hat{n} - \mathbf{n}) \label{eqn:cb 5}
\end{align}
which has the same form as a Count+ estimation problem. Using the Min estimator, which only exploits the non-negative support constraint, yields the updates
\begin{align}
\hat{n}_x &= \min_i (V_i(x) - M_{(i,h^{(i)}(x)), \cdot} \tilde{n}) + \tilde{n}_x \label{eqn:cb 4} \\	
\tilde{n}_x &= \max_i (\hat{n}_x - M_{(i,h^{(i)}(x)), \cdot} \hat{n} + V_i(x)). \label{eqn:cb 7}
\end{align}
Initializing with $\tilde{n} = 0$, repeating these iterations until convergence, and returning either $\hat{n}$ or $\tilde{n}$ as the estimate yields the counter braids estimator. 

\subsection{Supplementary figures}
\begin{figure}[H]
	\begin{tabular}{cc}
	\includegraphics[height=1.8in, width=1.7in]{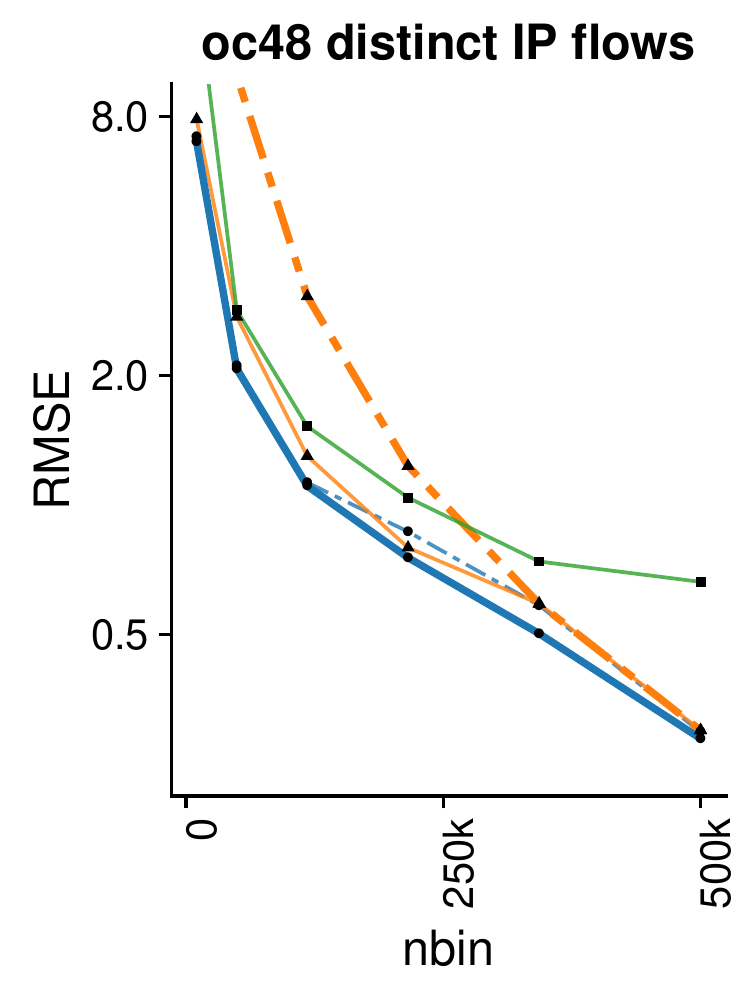} &
		\includegraphics[height=1.8in, width=1.7in]{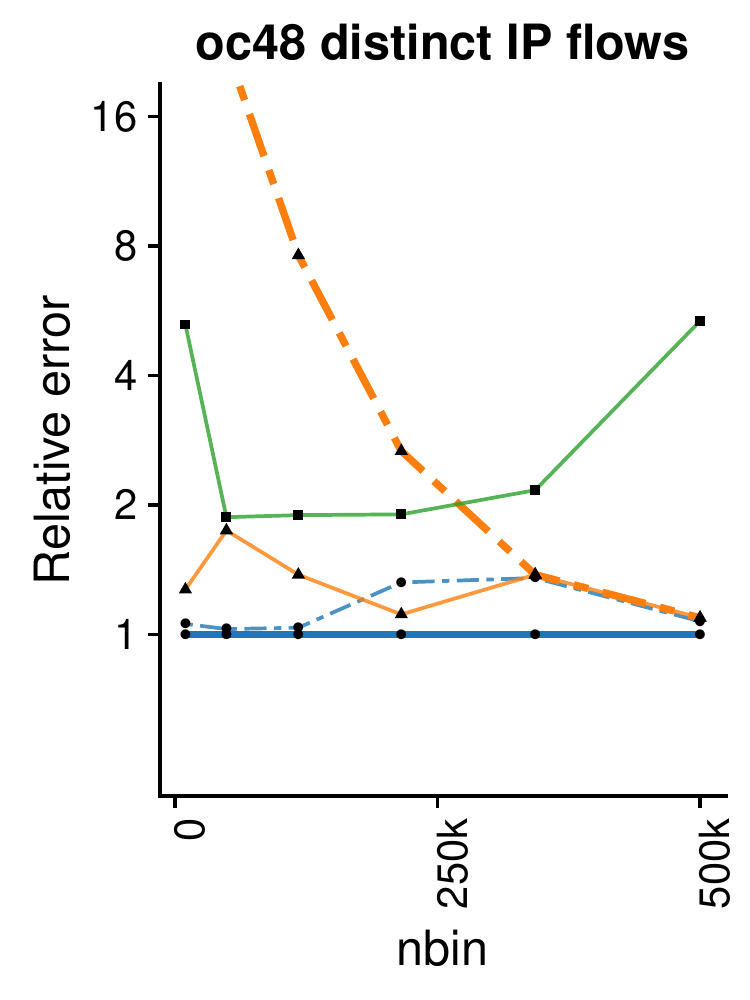} 
	\end{tabular}
\includegraphics[width=3.4in, height=0.2in]{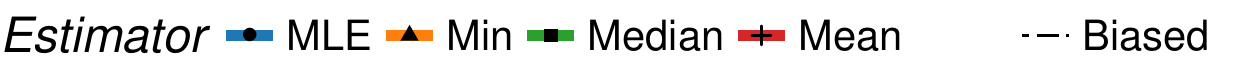}
		\caption{Errors in counts when counting distinct IP flows by source for the oc48 data}
		\label{fig:oc48distinct}
\end{figure}
\end{document}